\title{Order-Invariance in the Two-Variable Fragment of First-Order Logic}
\author{Julien Grange}{LACL, Université Paris-Est Créteil, France}{julien.grange@lacl.fr}{}{}
\authorrunning{J. Grange} 
\keywords{Finite model theory, two-variable logic, order-invariance} 
\NewDocumentCommand{\labsr}{O{r}}{S_{#1}}
\NewDocumentCommand{\refsr}{O{r}}{$(\labsr[#1])$\xspace}
\NewDocumentCommand{\laber}{O{r}}{E_{#1}}
\NewDocumentCommand{\refer}{O{r}}{$(\laber[#1])$\xspace}
\NewDocumentCommand{\labtr}{O{r}}{T_{#1}}
\NewDocumentCommand{\reftr}{O{r}}{$(\labtr[#1])$\xspace}
\begin{document}

\maketitle

\begin{abstract}
  We study the expressive power of the two-variable fragment of order-invariant first-order logic. This logic departs from first-order logic in two ways: first, formulas are only allowed to quantify over two variables. Second, formulas can use an additional binary relation, which is interpreted in the structures under scrutiny as a linear order, provided that the truth value of a sentence over a finite structure never depends on which linear order is chosen on its domain.

  We prove that on classes of structures of bounded degree, any property expressible in this logic is definable in first-order logic.
  We then show that the situation remains the same when we add counting quantifiers to this logic.
\end{abstract}

\section{Introduction}

The restriction of first-order logic to two variables (\FOtwo) holds an important place among the fragments of first-order logic (\FO), since it is the maximal fragment, with respect to the number of variables, for which the finite and general satisfiability problems are decidable~\cite{DBLP:journals/mlq/Mortimer75} - see~\cite{DBLP:journals/tcs/GradelO99} for a complete survey on these issues. They become undecidable as soon as we allow formulas to make use of three variables, as three variables are enough to encode grids. Tame as it is in this regard, \FOtwo is still fairly expressive (in particular, it embeds modal logic). It is thus natural to investigate its order-invariant extension \oifotwo.

In the order-invariant extension $\oifol$ of a logic \logic, one can make use in the \logic-formulas of a linear order on the vertices of the structures at hand, provided that the validity of said formulas in a given finite structure does not depend on the choice of a particular order.

If $\logic=\FO$, we get \oifo, whose syntax is not recursively enumerable. It has however been proven by Harwath and Zeume~\cite{DBLP:conf/lics/ZeumeH16} that on the other hand, \oifotwo has a recursive syntax. The natural follow-up to this result is to study the expressive power of \oifotwo. This shall be our endeavor in this article.

Perhaps surprisingly, it has been shown by Gurevich (see Section 5.2 of~\cite{DBLP:books/sp/Libkin04}) that such an order can indeed bring additional expressive power to \FO, even when restricted in this way: there exist properties which are not definable in \FO, but which can be expressed as soon as the use of a linear order is authorized, even in an invariant fashion. In symbols: $\oifo\not\subseteq\FO$. It is not hard to observe that $\oifotwo\not\subseteq\FOtwo$ (for instance, one can state in \oifotwo that a set has at least three elements, which is not possible in \FOtwo). It is however not clear whether $\oifotwo\subseteq\FO$, or whether even when restricted to two variables, the addition of an order allows one to express properties beyond the scope of \FO. This question is asked in~\cite{DBLP:conf/lics/ZeumeH16}. Is this paper, we prove that on any class of bounded degree, the inclusion $\oifotwo\subseteq\FO$ indeed holds - this is Theorem~\ref{th:main}, whose proof is the object of Sections~\ref{sec:construction} and \ref{sec:EFproof}.

We then explain in Section~\ref{sec:counting} how to extend this result from \oifotwo to \oictwo, where \Ctwo is the extension of \FOtwo with counting quantifiers. Precisely, we show with Theorem~\ref{th:mainCtwo} that $\oictwo\subseteq\FO$ when the degree is bounded.

Let us already mention that both of these inclusions are strict. This matter is further discussed in the conclusion.

\paragraph*{Related work} The question of the order-invariance of an \FOtwo-sentence has been shown to be decidable in~\cite{DBLP:conf/lics/ZeumeH16}. The expressive power of \oifotwo is only mentioned there as a follow up question.

We borrow the dichotomy between rare and frequent neighborhood types when the degree is bounded from \cite{DBLP:conf/lics/Grange20}. Beyond that, the philosophies of the constructions differ widely: in \cite{DBLP:conf/lics/Grange20}, successor relations are constructed in a very regular way, in order to create as few neighborhood types as possible in the structures with successor. On the other hand, we make sure to realize all the possible types in our construction.

One line of research investigates the expressive power of \oifo. Let us mention~\cite{DBLP:journals/jsyml/BenediktS09} and \cite{grange_et_al:LIPIcs:2020:11666}, which prove that \oifo has the same expressive power as \FO respectively on trees and on hollow trees. The present paper focuses on a weaker logic, but in a broader setting. Furthermore, while the techniques used in these papers involve the construction of several intermediate orders, making only localized changes at each step (in the fashion of \cite{DBLP:journals/tocl/GroheS00}, in which it is proved that \oifo retains the locality of \FO), we equip in one go each of our structures with a linear order.

Although not directly related to our inquiries, \cite{DBLP:journals/corr/abs-0907-0616} is also concerned with the expressive power of \FOtwo on ordered structures. This paper establishes a strict hierarchy, based on the quantifier rank and quantifier alternation, among properties definable in \FOtwo on words.

\section{Preliminaries}
\label{sec:prelim}

\subsection{General definitions}

We use the standard definition of first-order logic $\FO(\Sigma)$ with equality (written \FO when $\vocab$ is clear from the context) on a finite signature $\vocab$ composed of relation and constant symbols. By \FOtwo we denote the fragment of \FO in which the only two variables are $\var$ and $\varbis$.

Structures are denoted by a name that starts with (or consists of) a calligraphic upper-case letter, while their universes are denoted by the same name starting with a standard upper-case letter instead of the calligraphic one; for instance, $E\!x$ is the universe of the structure $\mathcal E\!x$. Throughout this paper, we consider only finite structures.

A sentence $\formule\in\FOtwo(\vocab\cup\{<\})$, where $<$ is a binary
relation symbol not belonging to $\vocab$, is said to be \textbf{order-invariant} if for every finite
$\vocab$-structure $\structgen$, and every pair of strict linear orders $\ord$ and
$\ordbis$ on $\structgendom$, $(\structgen,\ord)\models\formule$ iff $(\structgen,\ordbis)\models\formule$.
It is then convenient to omit the interpretation for
the symbol $<$, and to write $\structgen\models\formule$ iff $(\structgen,\ord)\models\formule$ for any (or, equivalently, every)
linear order $\ord$.

The set of order-invariant sentences using two variables is denoted $\oifotwo$.

Let $\logic,\logicbis$ be two logics defined over the same signature, and $\classe$ be a class of finite structures on this signature.

We say that a property $\prop\subseteq\classe$ is \textbf{definable} (or expressible) in $\logic$ if there exists an $\logic$-sentence $\formule$ such that $\prop=\{\structgen\in\classe:\ \structgen\models\formule\}$.

We say that $\logic\subseteq\logicbis$ on $\classe$ if every property on $\classe$ definable in $\logic$ is also definable in $\logicbis$.

It is quite clear that $\FOtwo\subseteq\ \oifotwo$: any sentence which does not make use of the order is indeed order-invariant. Furthermore, this inclusion is strict. For instance, over the empty signature, the property of having at least three elements is not definable in \FOtwo (this can easily be seen with the tools presented in Section~\ref{sec:EF}), but can be expressed in \oifotwo, for instance via the formula
$\exists\var\ \exists\varbis\ (\var < \varbis\ \land\ (\exists\var\ \varbis<\var))\,.$

The \textbf{quantifier rank} of a formula is the maximal number of quantifiers in a branch of its syntactic tree.
Given two $\vocab$-structures $\struct$ and $\structbis$, and $\logic$ being one of $\FO,\FOtwo$ and $\oifotwo$, we write $\struct\equiv^{\mathcal L}_k\structbis$ if $\struct$ and $\structbis$ satisfy the same $\logic$-sentences of quantifier rank at most $k$. In this case, we say that $\struct$ and $\structbis$ are \textbf{$\logic$-similar at depth $k$}.

We write $\struct\simeq\structbis$ if $\struct$ and $\structbis$ are isomorphic.

\paragraph*{Atomic types}
Let $\elem$ be an element of a structure $\structgen$. The \textbf{atomic type} $\tpgen$ of $\elem$ in $\structgen$ is the set of atomic formulas $\formule$ with at most one free variable $\var$ such that $\structgen,\var\mapsto\elem\models\formule$.

We define similarly the atomic type $\tpgen[\elem,\elembis]$ of a pair $(\elem,\elembis)$ of elements of $\structgen$ as the set of atomic formulas $\formule$ with free variables in $\{\var,\varbis\}$ such that $\structgen,\var\mapsto\elem,\varbis\mapsto\elembis\models\formule$.

Given a  linearly ordered $\vocab$-structure $\orderstructgen$, $\tpgen[\elem,\elembis][\orderstructgen]$ can be split into $\ordtpgen$ and $\tpgen[\elem,\elembis]$, where $\ordtpgen$ is one of $\{x<y\},\{x>y\}$ and $\{x=y\}$.

\paragraph*{Gaifman graphs}

The \textbf{Gaifman graph} $\gaifman$ of a structure
$\structgen$ is defined as $(\structgendom,\edgerel)$ where $(\elem,\elembis)\in\edgerel$ iff $\elem$ and $\elembis$
appear in the same tuple of a relation of $\structgen$. Notice that if a graph is seen as a structure on the signature consisting of a single binary relation symbol, its
Gaifman graph is none other than the unoriented version of itself.

By $\dist{\structgen}{\elem}{\elembis}$, we denote the distance between $\elem$ and $\elembis$ in
$\gaifman$.

For $B\subseteq\structgendom$, we note $\Neighgen{B}$ the set of elements at distance exactly $1$ from $B$ in $\gaifman$. In particular, $B\cap\Neighgen{B}=\emptyset$.

The \textbf{degree} of $\structgen$ is the
maximal degree of its Gaifman graph, and a class $\classe$ of
$\vocab$-structures is said to have \textbf{bounded degree} if there
exists some $d\in\N$ such that the degree of every
$\structgen\in\classe$ is at most $d$.

\subsection{Main result}
\label{sec:main}

We are now able to state the main result of this article. Remember that \oifotwo allows us to express properties that are beyond the scope of plain \FOtwo. We give an upper bound to its expressive power, when the degree is bounded:

\begin{theorem}
  \label{th:main}
  Let $\classe$ be a class of structures of bounded degree.
\\  Then $\oifotwo\subseteq\FO$ on $\classe$.
\end{theorem}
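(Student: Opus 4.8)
The plan is to show that for each quantifier rank $k$, the $\oifotwo$-similarity relation $\equiv^{\oifotwo}_k$ on a bounded-degree class $\classe$ is a refinement of $\equiv^{\FO}_m$ for some $m$ depending on $k$ and the degree bound $d$ — or rather the reverse inclusion at the level of distinguishability: any two structures that are $\FO$-similar at a sufficiently large depth are also $\oifotwo$-similar at depth $k$. Since any $\oifotwo$-definable property is a union of $\equiv^{\oifotwo}_k$-classes for some $k$, this would make it a union of $\equiv^{\FO}_m$-classes, hence $\FO$-definable. So the goal reduces to: given $k$, find $m$ such that $\struct\equiv^{\FO}_m\structbis$ implies $\struct\equiv^{\oifotwo}_k\structbis$, for $\struct,\structbis\in\classe$.

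**The core of the argument** is an Ehrenfeucht--Fra\"iss\'e-style construction. Fix an order-invariant $\oifotwo$-sentence $\formule$ of quantifier rank $k$. Given two structures $\struct\equiv^{\FO}_m\structbis$ with $m$ large, I must equip each with a linear order so that Duplicator wins the $k$-round, two-pebble EF game on the resulting ordered structures $(\struct,\ord)$ and $(\structbis,\ordbis)$ — because then $(\struct,\ord)\models\formule \iff (\structbis,\ordbis)\models\formule$, and by order-invariance this transfers to $\struct\models\formule \iff \structbis\models\formule$. The key is that $\FOtwo$ (even with two pebbles over $k$ rounds) is a \emph{local} logic in a strong sense: over $k$ rounds a pebble can only "see" a ball of radius bounded in terms of $k$, and because the degree is bounded, such a ball is of bounded size. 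Thus the order only needs to be chosen coherently on the level of these bounded-radius neighborhoods (neighborhood types). This is where the dichotomy between \emph{rare} and \emph{frequent} neighborhood types (borrowed from \cite{DBLP:conf/lics/Grange20}) enters: a bounded-degree structure has only finitely many neighborhood types of a given radius; the frequent ones occur many times (enough to be matched freely), while the rare ones occur few times and are pinned down by the $\FO$-similarity at depth $m$.

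**Concretely, the construction (Sections~\ref{sec:construction} and~\ref{sec:EFproof}) would proceed as follows.** First, from $\struct\equiv^{\FO}_m\structbis$ with $m$ chosen large enough (as a function of $k$ and $d$), extract that $\struct$ and $\structbis$ realize the same rare neighborhood types with the same multiplicities, and agree on the "arrangement" of rare types up to the relevant locality radius, while having many copies of each frequent type. Then design a linear order on each structure that is \emph{canonical} in the following sense: it orders the elements primarily by their neighborhood type (listing rare-type elements in a fixed, type-determined pattern, then frequent-type elements in blocks), and within each block of a fixed frequent type imposes a regular pattern (e.g.\ consistent with some auxiliary local data) so that the order restricted to any bounded ball looks the same in $\struct$ and $\structbis$. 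The point of realizing \emph{all} possible types (as the introduction stresses, in contrast to \cite{DBLP:conf/lics/Grange20}) is that Duplicator's strategy in the EF game can always respond to a pebble placed near a given type by finding a matching local configuration — including a matching local order pattern — on the other side. Finally, verify that with these two ordered structures, Duplicator wins the $k$-round two-pebble game: the standard back-and-forth, maintaining that the two pebbled elements have matching atomic types \emph{including} their order relationship, and that their bounded neighborhoods (shrinking with each remaining round) are isomorphic as ordered structures.

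**The main obstacle** I anticipate is the construction of the linear order itself so that it is simultaneously (i) genuinely total and globally consistent on each structure, and (ii) \emph{locally indistinguishable} between the two structures within every ball of the relevant radius, including handling the "seams" where blocks of different neighborhood types meet and where rare types sit. Getting an EF-game-invariant order out of purely $\FO$-definable data about rare types, while freely shuffling the abundant frequent-type elements, is delicate: one must ensure that when Duplicator moves a pebble, the \emph{comparisons} ($<$) between the new pebble and the old one can be replicated, which constrains how blocks are interleaved relative to one another, not just internally. I expect this to require a careful bookkeeping of which neighborhood types can appear "between" which others along the order, and an argument that the bounded-degree assumption keeps the number of such configurations finite so that $\FO$-similarity at large enough depth controls them all.
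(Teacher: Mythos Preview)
Your high-level plan matches the paper exactly: reduce to showing that $\struct\equiv^{\FO}_{f(k)}\structbis$ implies $\struct\equiv^{\oifotwo}_k\structbis$, and do this by constructing orders $\ord,\ordbis$ for which Duplicator wins the $k$-round two-pebble game on $(\struct,\ord)$ and $(\structbis,\ordbis)$. The rare/frequent dichotomy and the shrinking-environment invariant you describe are also the right ingredients, and they correspond to the paper's invariants $(E_r)$ and $(T_r)$.

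Where you diverge from the paper, and where there is a real gap, is the construction of the order. Your proposal is to make the order \emph{canonical}: sort elements ``primarily by their neighborhood type'', with rare types in a fixed pattern and frequent types in blocks. This does not work, for precisely the reason you flag as the main obstacle: the order comparison $\pebbleone<\pebbletwo$ between two arbitrary frequent-type elements is not determined by any local data, and at the seams between blocks the local ordered neighborhoods (the $k$-\emph{environment} types) are uncontrolled. There is no type-determined way to order a block of a single frequent type so that every element in it realizes a prescribed $k$-environment type, because neighbors of an element typically have different neighborhood types and hence sit in other blocks.

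The paper solves this differently, and the solution is not a canonical order at all. It partitions $\structdom$ into a sequence of segments
\[
\Rare\cdot\Left[0]\cdots\Left[2k]\cdot\Middle\cdot\Right[2k]\cdots\Right[0],
\]
where each $\Left[j]$ (for $j\le k$) contains a \emph{universal} sub-segment $\ULeft[j]$: a bounded set of deliberately planted occurrences, one for \emph{every} $k$-environment type $\tau_l$ extending a frequent $k$-neighborhood type, each with its full $k$-ball ordered so as to realize $\tau_l$. The remaining ``neighbor'' sub-segments $\NLeft[j]$ just absorb the Gaifman-neighbors of the previous segment. Crucially, the order on $\structbis$ is \emph{not} built independently by the same recipe: a bounded piece $\Segment[2k]$ of $(\struct,\ord)$ is transferred to $\structbis$ via an isomorphism $\bijsymb$ (whose existence uses $\struct\equiv^{\FO}_{f(k)}\structbis$), and only the large middle is ordered arbitrarily on each side. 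Duplicator then maintains, in addition to your environment invariant, a positional invariant $(S_r)$: whenever a pebble lies in $\Segmentind[r]$ it is matched via $\bijsymb$. The strategy has three modes rather than one: tit-for-tat along $\bijsymb$ near the endpoints, follow the local environment isomorphism when Spoiler moves to a Gaifman-neighbor, and otherwise jump to the appropriate planted universal element in $\Leftbis[r+1]$ or $\Rightbis[r+1]$. The segment indexing is what makes the order comparison come out right in this last case: the response always lands in a segment strictly to the correct side of the other pebble. Your sketch has no analogue of this positional bookkeeping, and without it the ``seam'' problem you anticipate is fatal.
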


For the remainder of this paper, we fix a signature $\vocab$, an integer $d$ and a class $\classe$ of $\vocab$-structures of degree at most $d$.

Let us now show the skeleton of our proof. The technical part of the proof will be the focus of Sections~\ref{sec:construction} and~\ref{sec:EFproof}.

Our general strategy is to show the existence of a function $f:\N\to\N$ such that every formula $\formule\in\ \oifotwo$ of quantifier rank $k$ is equivalent on $\classe$ (i.e. satisfied by the same structures of $\classe$) to an \FO-formula $\formulebis$ of quantifier rank at most $f(k)$.

To prove this, it is enough to show that for any two structures $\struct,\structbis\in\classe$ such that $\struct\foeq{f(k)}\structbis$, we have $\struct\oifotwoeq{k}\structbis$. Indeed, the class of structures satisfying a formula $\formule\in\ \oifotwo$ of quantifier rank $k$ is a union of equivalence classes for the equivalence relation $\oifotwoeq{k}$, whose intersection with $\classe$ is in turn the intersection of $\classe$ with a union of equivalence classes for $\foeq{f(k)}$. It is folklore (see, e.g., \cite{DBLP:books/sp/Libkin04}) that this relation has finite index, and that each of its equivalence classes is definable by an \FO-sentence of quantifier rank $f(k)$. Then $\formulebis$ is just the finite disjunction of these \FO-sentences.

In order to show that $\struct\oifotwoeq{k}\structbis$, we will construct in Section~\ref{sec:construction} two particular orders $\ord,\ordbis$ on these respective structures, and we will prove in Section~\ref{sec:EFproof} that \begin{equation}\orderstruct\fotwoeq{k}\orderstructbis\,.\label{eq:order_eq}\end{equation}
This concludes the proof, since any sentence $\formuleter\in\ \oifotwo$ with quantifier rank at most $k$ holds in $\struct$ iff it holds in $\orderstruct$ (by definition of order-invariance), iff it holds in $\orderstructbis$ (by (\ref{eq:order_eq})), iff it holds in $\structbis$.

\section{\texorpdfstring{Constructing linear orders on $\struct$ and $\structbis$}{Constructing linear orders on A0 and A1}}
\label{sec:construction}

Recall from Section~\ref{sec:main} that our goal is to find a function $f$ such that, given two structures $\struct,\structbis$ in $\classe$ such that
\begin{equation}
  \label{eq:foeq}
  \struct\foeq{f(k)}\structbis\,,
\end{equation}
we are able to construct two linear orders $\ord,\ordbis$ such that $\orderstruct\fotwoeq{k}\orderstructbis$.

In this section, we set $f$ and we detail the construction of such orders. The proof of \oifo-similarity between $\orderstruct$ and $\orderstructbis$ will be the focus of Section~\ref{sec:EFproof}.

Let us now explain how we define $f$. For that, we need to introduce the notion of neighborhood and neighborhood type. These notions are defined in Section~\ref{sec:defneigh}. We then explain in Section~\ref{sec:freqneigh} how to divide neighborhood types into rare ones and frequent ones. Finally, the details of the construction are given in Section~\ref{sec:orders}.

\subsection{Neighborhoods}
\label{sec:defneigh}

Let us now define the notion of neighborhood of an element in a structure.

Let $c$ be a new constant symbol, and let $\structgen\in\classe$.

For $k\in\N$ and $\elem\in\structgendom$, the (pointed) \textbf{$k$-neighborhood}
$\neighgen$ of $\elem$ in $\structgen$ is the $(\vocab\cup\{c\})$-structure
whose restriction to the vocabulary $\vocab$ is the substructure of $\structgen$ induced by the set
$\boulegen=\{\elembis\in\structgendom:\dist{\structgen}{\elem}{\elembis}\leq k\}\,,$ and where $c$ is interpreted
as $\elem$. In other words, it consists of all the elements at distance at most $k$ from $\elem$ in $\structgen$, together with the relations they share in $\structgen$; the center $\elem$ being marked by the constant $c$. We sometimes refer to $\boulegen$ as the $k$-neighborhood of $a$ in $\structgen$ as well, but the context will always make clear whether we refer to the whole substructure or only its domain.

The \textbf{$k$-neighborhood type} $\type=\ntpgen$ of $\elem$ in $\structgen$ is the
isomorphism class of its $k$-neighborhood. We say that $\type$ is a $k$-neighborhood type
over $\vocab$, and that $\elem$ is an \textbf{occurrence} of
$\type$. We denote by $\nocc{\structgen}{\type}$ the number of occurrences of
$\type$ in $\structgen$, and we write
$\threq{\struct}{\structbis}{k}{t}$ to mean that for every $k$-neighborhood type
$\type$, $\nocc{\struct}{\type}$ and $\nocc{\structbis}{\type}$ are
either equal, or both larger than $t$.

Let $\Ntp$ denote the set of $k$-neighborhood types over $\vocab$ occurring in structures of degree at most $d$. Note that $\Ntp$ is a finite set.

The interest of this notion resides in the fact that when the degree is bounded, \FO is exactly able to count the number of occurrences of neighborhood types up to some threshold~\cite{DBLP:journals/iandc/FaginSV95}. We will only use one direction of this characterization, namely:

\begin{proposition}
  \label{prop:FO_neigh}
  For all integers $k$ and $t$, there exists some $\hat f(k,t)\in\N$ (which also depends on the bound $d$ on the degree of structures in $\classe$) such that for all structures $\struct,\structbis\in\classe$,
  \[\struct\foeq{\hat f(k,t)}\structbis\quad\to\quad\threq{\struct}{\structbis}{k}{t}\,.\]  
\end{proposition}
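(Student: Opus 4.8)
The plan is to establish Proposition~\ref{prop:FO_neigh} as a direct consequence of the Hanf-locality of \FO on classes of bounded degree, combined with the classical fact that Hanf-equivalence with a sufficiently large radius and threshold implies $\equiv^{\FO}_k$-equivalence. Concretely, I recall the Hanf Locality Theorem (see~\cite{DBLP:journals/iandc/FaginSV95}): for every $k$ and every bound $d$ on the degree, there exist a radius $r = r(k,d)$ and a threshold $q = q(k,d)$ such that if $\struct$ and $\structbis$ are $(r,q)$-Hanf-equivalent — meaning that for every isomorphism type $\tau$ of $r$-neighborhoods, the number of elements of $\struct$ with $r$-neighborhood type $\tau$ and the corresponding number in $\structbis$ are either equal or both $\geq q$ — then $\struct \fotwoeq{k} \structbis$, actually $\struct \foeq{k} \structbis$. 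What I actually need is the contrapositive-flavored statement in the other direction: that $\equiv^{\FO}_{\hat f(k,t)}$-equivalence forces the $k$-neighborhood counts to agree up to threshold $t$.

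First I would observe that, for fixed $k$ and for each type $\type \in \Ntp$, the property ``having exactly $j$ occurrences of $\type$'' for $j \leq t$, and the property ``having more than $t$ occurrences of $\type$'', are all expressible in \FO. Indeed, since the degree is bounded by $d$, the $k$-neighborhood of an element is a structure of bounded size (at most $1 + d + d^2 + \dots + d^k$ elements), so ``$\var$ has $k$-neighborhood type $\type$'' is expressible by an \FO-formula $\psi_\type(\var)$ using a number of variables and quantifier rank bounded by a function of $k$ and $d$ only; then counting occurrences of $\type$ up to threshold $t$ costs $t+1$ extra nested quantifiers. Let $\hat f(k,t)$ be the maximal quantifier rank of these finitely many counting sentences, over all $\type \in \Ntp$ and all $j \leq t$; this is finite because $\Ntp$ is finite and depends only on $k$ and $d$.

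With this definition in hand, the implication is immediate: suppose $\struct \foeq{\hat f(k,t)} \structbis$, and let $\type$ be any $k$-neighborhood type. If $\nocc{\struct}{\type} = j$ for some $j \leq t$, then $\struct$ satisfies the sentence ``there are exactly $j$ occurrences of $\type$'', which has quantifier rank at most $\hat f(k,t)$, so $\structbis$ satisfies it too, whence $\nocc{\structbis}{\type} = j$; symmetrically if $\nocc{\structbis}{\type} \leq t$. The only remaining case is $\nocc{\struct}{\type} > t$ and $\nocc{\structbis}{\type} > t$, which is exactly the other alternative in the definition of $\threq{\struct}{\structbis}{k}{t}$. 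Hence $\threq{\struct}{\structbis}{k}{t}$ holds, and we are done.

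I do not expect a serious obstacle here, since this is the ``easy direction'' of the Fagin--Stockmeyer--Vardi characterization and reduces to writing down \FO-formulas for bounded-size neighborhoods and for threshold counting; the only point requiring a modicum of care is to make sure that $\psi_\type$ can be written in \FO at all (rather than needing more structure), which is where boundedness of the degree is used — it guarantees both that each $k$-neighborhood is finite of size bounded in $k,d$, and that $\Ntp$ is finite, so that $\hat f(k,t)$ is well-defined. Note that we genuinely only need this one direction of the characterization, as announced in the excerpt, so there is no need to invoke Hanf locality in its full strength.
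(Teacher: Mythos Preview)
Your argument is correct. Note, however, that the paper does not actually prove Proposition~\ref{prop:FO_neigh}: it is stated as one direction of the known characterization of \FO on bounded-degree structures and attributed to Fagin--Stockmeyer--Vardi~\cite{DBLP:journals/iandc/FaginSV95}, with no proof given. Your write-up supplies precisely the standard proof of this ``easy direction'': express ``$\var$ has $k$-neighborhood type $\type$'' by an \FO-formula of quantifier rank bounded in $k$ and $d$ (using that neighborhoods have size at most $1+d+\dots+d^k$), then add $t+1$ quantifiers to count occurrences up to threshold $t$, and take $\hat f(k,t)$ to be the maximum quantifier rank over the finitely many $\type\in\Ntp$ and thresholds $j\le t$.

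Two minor remarks. First, your opening paragraph about Hanf locality is a detour: the Hanf theorem runs in the direction $\threq{\struct}{\structbis}{r}{q}\Rightarrow\struct\foeq{k}\structbis$, which is the converse of what is needed here, and you rightly abandon it in favor of the direct definability argument. You could simply drop that paragraph. Second, the case analysis at the end can be streamlined: it suffices to observe that for each $\type$ the sentence ``there are at least $t+1$ occurrences of $\type$'' has quantifier rank at most $\hat f(k,t)$, so $\struct$ and $\structbis$ agree on it; if both satisfy it we are in the ``both $>t$'' case, and otherwise both have at most $t$ occurrences, in which case the sentences ``exactly $j$ occurrences'' for $j\le t$ force equality of the counts.
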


We now exhibit a function $\thr:\N\to\N$ such that, if $\threq{\struct}{\structbis}{k}{\thr(k)}$, then one can construct $\ord,\ordbis$ satisfying~(\ref{eq:order_eq}). Proposition~\ref{prop:FO_neigh} then ensures that $f:k\mapsto\hat f(k,\thr(k))$ fits the bill.

Let us now explain how the function $\thr$ is chosen.

\subsection{Frequency of a neighborhood type}
\label{sec:freqneigh}

Let us denote $|\Ntp|$ as $N$.

Recall that every $\structgen\in\classe$ has degree at most $d$. What this means is that if we consider the $k$-neighborhood types that have enough occurrences in $\structgen$, they must each have many occurrences that are scattered across $\structgen$. Not only that, but we can also make sure that such occurrences are far from all the occurrences of every other $k$-neighborhood type (which have few occurrences in $\structgen$, by definition). Since the degree is bounded, $N$ is bounded too, which prevents our distinction (which will be formalized later on) between rare neighborhood types and frequent neighborhood types from being circular.

Such a dichotomy is introduced and detailed in~\cite{DBLP:conf/lics/Grange20}; we simply adapt this construction to our needs. In the remainder of this section, we describe this construction at a high level, and leave the technical details (such as the exact bounds) to the reader.

The proof of the following lemma (in the vein of \cite{atserias2008preservation}) is straightforward, and relies on the degree boundedness hypothesis. Intuitively, Lemma~\ref{lem:scatter} states that when the degree is bounded, it is not possible for all the elements of large sets to be concentrated in one corner of the structure, thus making it possible to pick elements in each set that are scattered across the structure.

\begin{lemma}
  \label{lem:scatter}
  Given three integers $m$, $\delta$, $s$, there exists a threshold $g(m,\delta,s)\in\N$ such that for all $\structgen\in\classe$, all $B\subseteq\structgendom$ of size at most $s$, and all subsets $C_1,\cdots,C_n\subseteq\structgendom$ (with $n\leq N$) of size at least $g(m,\delta,s)$, it is possible to find elements $c_j^1,\cdots,c_j^m\in C_j$ for all $j\in\{1,\cdots,n\}$, such that for all $j,j'\in\{1,\cdots,n\}$ and $i,i'\in\{1,\cdots,m\}$, $\dist{\structgen}{c_j^i}{B}>\delta$ and $\dist{\structgen}{c_j^i}{c_{j'}^{i'}}>\delta$ if $(j,i)\neq(j',i')$.
\end{lemma}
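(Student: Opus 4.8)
The plan is to prove Lemma~\ref{lem:scatter} by a greedy selection argument, leveraging the fact that in a bounded-degree structure, balls of bounded radius have bounded size. First I would record the combinatorial consequence of degree boundedness: there is a function $\beta(r)$ (depending only on $d$) such that for every $\structgen\in\classe$ and every $\elem\in\structgendom$, the ball $\boulegen$ of radius $r$ around $\elem$ has at most $\beta(r)$ elements (concretely $\beta(r)\le 1+d+d(d-1)+\cdots\le (d+1)^r$). Consequently, the set of elements lying within distance $\delta$ of a fixed set $X$ has size at most $|X|\cdot\beta(\delta)$.

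Next I would pick the $c_j^i$ greedily, in any fixed order of the $nm$ pairs $(j,i)$. Suppose we have already chosen some set $P$ of at most $nm\le Nm$ elements, and we now wish to choose $c_j^i\in C_j$. The element $c_j^i$ must avoid the "forbidden zone" consisting of all elements within distance $\delta$ of $B\cup P$. By the observation above, this forbidden zone has size at most $(|B|+|P|)\cdot\beta(\delta)\le (s+Nm)\cdot\beta(\delta)$. Hence, as long as $|C_j|=g(m,\delta,s)$ is chosen strictly larger than $(s+Nm)\cdot\beta(\delta)$, there remains at least one admissible element in $C_j$, which we select as $c_j^i$. Setting $g(m,\delta,s):=(s+Nm)\cdot\beta(\delta)+1$ (recalling that $N$ and $d$ are fixed for the paper) makes the induction go through, and the resulting choices satisfy $\dist{\structgen}{c_j^i}{B}>\delta$ and $\dist{\structgen}{c_j^i}{c_{j'}^{i'}}>\delta$ whenever $(j,i)\ne(j',i')$, as required.

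There is essentially no hard part here: the only mild subtlety is to make sure the bound $g(m,\delta,s)$ does not depend circularly on the quantities being chosen, which is why it is important that $n\le N$ with $N$ a fixed constant (bounded because the degree is bounded), so that the total number $nm$ of selected elements is controlled by $Nm$ independently of the structure. One should also note that the elements $c_j^1,\dots,c_j^m$ within a single $C_j$ are automatically pairwise at distance $>\delta$ by the same argument (each new pick avoids the $\delta$-neighborhood of all previous picks), so no separate treatment is needed for the "same $j$" case. I would also remark that one could equally well phrase the selection as choosing all $m$ elements of $C_1$, then all of $C_2$, and so on; the order is immaterial.
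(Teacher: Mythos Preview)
Your argument is correct and is precisely the ``straightforward'' greedy selection the paper alludes to (the paper does not spell out a proof, only noting that it follows from degree boundedness in the spirit of \cite{atserias2008preservation}). The key ingredients you isolate---bounded ball size $\beta(\delta)$ from the degree bound $d$, and the a priori bound $n\le N$ so that the total number of picks is at most $Nm$ independently of $\structgen$---are exactly what is needed, and your explicit threshold $g(m,\delta,s)=(s+Nm)\cdot\beta(\delta)+1$ is fine.
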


Our goal is, given a structure $\structgen\in\classe$, to partition the $k$-neighborhood types into two classes: the frequent types, and the rare types. The property we wish to ensure is that there exist in $\structgen$ some number $m$ (which will be made precise later on, but only depends on $k$) of occurrences of each one of the frequent $k$-neighborhood types which are both
\begin{itemize}
\item at distance greater than $\delta$ (which, as for $m$, is a function of $k$ and will be fixed in the following) from one another, and
\item at distance greater than $\delta$ from every occurrence of a rare $k$-neighborhood type.
\end{itemize}

To establish this property, we would like to use Lemma~\ref{lem:scatter}, with $s$ being the total number of occurrences of all the rare $k$-neighborhood types, and $C_1,\cdots,C_n$ being the sets of occurrences of the $n$ distinct frequent $k$-neighborhood types.

The number $N$ of different $k$-neighborhood types of degree at most $d$ is bounded by a function of $k$ (as well as $\Sigma$ and $d$, which are fixed). Hence, we can proceed according to the following (terminating) algorithm to make the distinction between frequent and rare types:
\begin{enumerate}
\item First, let us mark every $k$-neighborhood type as frequent.
\item\label{enu:while} Among the types which are currently marked as frequent, let $\tau$ be one with the smallest number of occurrences in $\structgen$.
\item If $\nocc{\structgen}{\tau}$ is at least $g(m,\delta,s)$ ($g$ being the function from Lemma~\ref{lem:scatter}) where $s$ is the total number of occurrences of all the $k$-neighborhood types which are currently marked as rare, then we are done and the marking frequent/rare is final. Otherwise, mark $\tau$ as rare, and go back to step~\ref{enu:while} if there remains at least one frequent $k$-neighborhood type.
\end{enumerate}

Notice that we can go at most $N$ times through step~\ref{enu:while} - $N$ depending only on $k$. Furthermore, each time we add a type to the set of rare $k$-neighborhood types, we have the guarantee that this type has few occurrences (namely, less than $g(m,\delta,s)$, where $s$ can be bounded by a function of $k$).

It is thus apparent that the threshold $t$ such that a $k$-neighborhood type $\tau$ is frequent in $\structgen$ iff $\nocc{\structgen}{\tau}\geq t$ can be bounded by some $T$ depending only on $k$ - importantly, $T$ is the same for all structures of $\classe$.

Let us now make the above more formal. For $t\in\N$ and $\structgen\in\classe$, let $\Freq[\structgen][k][\geq t]\subseteq\Ntp$ denote the set of $k$-neighborhood types which have at least $t$ occurrences in $\structgen$.

By applying the procedure presented above, we derive the following lemma:

\begin{lemma}
  \label{lem:threshold}
  Let $k,m,\delta\in\N$. There exists $T\in\N$ such that for every $\structgen\in\classe$, there exists some $t\leq T$ such that \[t\geq g(m,\delta,\sum_{\type\notin\Freq[\structgen][k][\geq t]}\nocc{\structgen}{\type})\,.\]
\end{lemma}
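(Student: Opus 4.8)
The plan is to prove Lemma~\ref{lem:threshold} by formalizing the terminating algorithm described in the preceding paragraph as a monotone iteration and extracting the bound $T$ from the fact that it runs for at most $N$ steps. First I would fix $k,m,\delta$ and define, for each $\structgen\in\classe$, a finite decreasing sequence of candidate thresholds. Concretely, list the $k$-neighborhood types $\tau_1,\tau_2,\dots$ of $\structgen$ in non-decreasing order of $\nocc{\structgen}{\tau_i}$, and consider successively marking $\tau_1,\tau_2,\dots$ as rare: after having marked $\{\tau_1,\dots,\tau_{i-1}\}$ as rare, set $s_i=\sum_{j<i}\nocc{\structgen}{\tau_j}$ and $t_i=\nocc{\structgen}{\tau_i}$, and check whether $t_i\geq g(m,\delta,s_i)$. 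The algorithm stops at the first index $i$ for which this inequality holds; such an $i$ always exists, because in the worst case all but one type gets marked rare, and for the last remaining frequent type $\tau_i$ the set of rare occurrences has size $s_i\le|\structgendom|$ — but more to the point, the process is guaranteed to terminate within $i\le N$ iterations since there are at most $N$ types in $\Ntp$, and once only one frequent type is left the stopping condition is forced (or the loop exits by the ``if there remains at least one frequent type'' clause). Taking $t=t_i$ for the stopping index gives exactly $t\ge g\bigl(m,\delta,\sum_{\type\notin\Freq[\structgen][k][\geq t]}\nocc{\structgen}{\type}\bigr)$, since the types marked rare are precisely those with fewer than $t$ occurrences, i.e. those not in $\Freq[\structgen][k][\geq t]$.

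The second step is to produce the \emph{uniform} bound $T$, independent of $\structgen$. Here the key observation is that $g(m,\delta,\cdot)$ is a fixed function (from Lemma~\ref{lem:scatter}) and $N=|\Ntp|$ depends only on $k$ (and the fixed $\vocab,d$). Define $T$ by the recursion $s^{(0)}=0$, $T^{(0)}=0$, and $T^{(i+1)}=g\bigl(m,\delta,s^{(i)}\bigr)$, $s^{(i+1)}=s^{(i)}+T^{(i+1)}$, and set $T=T^{(N)}$ (or $\max_{i\le N}T^{(i)}$). The claim is that the stopping threshold $t=t_i$ above always satisfies $t\le T$: indeed, if the algorithm has not stopped before step $i$, then each previously-marked type $\tau_j$ ($j<i$) had $\nocc{\structgen}{\tau_j}=t_j<g(m,\delta,s_j)$, and one shows by induction on $j$ that $s_j\le s^{(j)}$, hence $t_i<g(m,\delta,s_i)\le g(m,\delta,s^{(i)})=T^{(i+1)}\le T$ when the algorithm does \emph{not} stop at step $i$; and at the stopping step it still holds that $t_i\le g(m,\delta,s_i)$ in the degenerate final case, or more simply $t_i$ is bounded by the value of the next threshold in the recursion. (Monotonicity of $g(m,\delta,\cdot)$ in its last argument, which one gets for free by enlarging the sets $C_j$ in Lemma~\ref{lem:scatter} if necessary, is what makes the induction $s_j\le s^{(j)}$ go through.)

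The main obstacle — really the only subtle point — is making the uniformity argument airtight: one must be careful that the bound on $s$ at each step genuinely depends only on the \emph{number} of rare types accumulated so far and on the function $g$, not on the structure $\structgen$. This is exactly where boundedness of degree enters, via $N=|\Ntp|<\infty$: since at most $N$ types can ever be marked rare, the recursion $T^{(0)},\dots,T^{(N)}$ terminates after finitely many steps and yields a single $T$ valid for all of $\classe$. I would also note explicitly that $g(m,\delta,\cdot)$ may be assumed non-decreasing without loss of generality, so that $s_j\le s^{(j)}$ propagates to $g(m,\delta,s_j)\le g(m,\delta,s^{(j)})$. With these pieces in place the lemma follows; the rest is the bookkeeping of the termination argument already sketched in the paragraph preceding the statement.
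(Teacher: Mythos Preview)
Your approach is exactly the one the paper intends (the paper itself only says ``by applying the procedure presented above, we derive the following lemma'' and leaves the details implicit), and your recursion $T^{(i+1)}=g(m,\delta,s^{(i)})$, $s^{(i+1)}=s^{(i)}+T^{(i+1)}$ together with the induction $s_j\le s^{(j)}$ is the right way to extract a uniform bound.

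There is, however, a genuine slip in the second step. You set $t=t_i=\nocc{\structgen}{\tau_i}$ at the stopping index and then claim $t_i\le T$. But at the stopping step the inequality goes the \emph{other} way: you have $t_i\ge g(m,\delta,s_i)$, and nothing prevents $t_i$ from being enormous (the least-frequent ``frequent'' type may still have arbitrarily many occurrences in a large structure). Your sentence ``at the stopping step it still holds that $t_i\le g(m,\delta,s_i)$ in the degenerate final case, or more simply $t_i$ is bounded by the value of the next threshold in the recursion'' is where the argument breaks: neither clause is true.

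The fix is simple: do not take $t=t_i$; take $t=g(m,\delta,s_i)$ instead. Since the algorithm failed at step $i-1$, we have $t_{i-1}<g(m,\delta,s_{i-1})\le g(m,\delta,s_i)=t$ (using the monotonicity of $g$ you already noted), and since it succeeded at step $i$, $t\le t_i$. Hence the types with fewer than $t$ occurrences are still exactly $\tau_1,\dots,\tau_{i-1}$, so the sum on the right-hand side is $s_i$ and the required inequality $t\ge g(m,\delta,s_i)$ holds by definition of $t$. Now the bound is immediate: $t=g(m,\delta,s_i)\le g(m,\delta,s^{(i)})=T^{(i+1)}$, and since $i\le N$ you may take $T=T^{(N+1)}$ (your $T^{(N)}$ is off by one). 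The case where the loop exhausts all types is then harmless: the total size $|\structgendom|$ is itself bounded by $s^{(N)}$, and the same choice of $T$ works.
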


Let $\Freq[\structgen]:=\Freq[\structgen][k][\geq t]$ for the smallest threshold $t$ given in Lemma~\ref{lem:threshold}.

Some $k$-neighborhood type $\type\in\Ntp$ is said to be \textbf{frequent} in $\structgen\in\classe$ if it belongs to $\Freq[\structgen]$; that is, if $\nocc{\structgen}{\type}\geq t$. Otherwise, $\type$ is said to be \textbf{rare}. With the definition of $g$ in mind, Lemma~\ref{lem:threshold} can then be reformulated as follows: in every structure $\structgen\in\classe$, one can find $m$ occurrences of each frequent $k$-neighborhood type which are at distance greater than $\delta$ from one another and from the set of occurrences of every rare $k$-neighborhood type.

All that remains is for us to give a value (depending only on $k$) to the integers $m$ and $\delta$: let $M:=\max\{|\type|:\type\in\Ntp\}$ ($M$ indeed exists, and is a function of $k$ - recall that the signature $\vocab$ and the degree $d$ are assumed to be fixed). Let us consider
\begin{equation}
  \label{eq:m_and_k}
  m:=2\cdot(k+1)\cdot M!\qquad\text{and}\qquad\delta:=4k\,.
\end{equation}
We then define $\thr(k)$ as the integer $T$ provided by Lemma~\ref{lem:threshold} for these values of $m$ and $\delta$. The threshold $\thr(k)$ indeed only depends on $k$.

Finally, notice that if $\threq{\struct}{\structbis}{k}{\thr(k)}\,,$ then $\Freq=\Freq[\structbis]\,.$

As discussed in Section~\ref{sec:defneigh}, there exists a function $f$ such that $\struct\foeq{f(k)}\structbis$ entails $\threq{\struct}{\structbis}{k}{\thr(k)}$. We also make sure that $f(k)\geq\thr(k)\cdot N +1$ for every $k$.

Let us now consider $\struct,\structbis\in\classe$ such that $\struct\foeq{f(k)}\structbis$ for such an $f$.

If $\Freq=\emptyset$, then $|\struct|\leq \thr(k)\cdot N$. This guarantees that $\struct\simeq\structbis$, and in particular that $\struct\oifotwoeq{k}\structbis$. From now on, we suppose that there is at least one frequent $k$-neighborhood type.

The construction of two linear orders $\ord$ and $\ordbis$ satisfying $\orderstruct\fotwoeq{k}\orderstructbis$ is the object of Section~\ref{sec:orders}.

\subsection{\texorpdfstring{Construction of $\ord$ and $\ordbis$}{Construction of <0 and <1}}
\label{sec:orders}

This section is dedicated to the definition of two linear orders $\ord,\ordbis$ on $\struct,\structbis\in\classe$. We then prove in Section~\ref{sec:EFproof} that $\orderstruct$ and $\orderstructbis$ are \FOtwo-similar at depth $k$.

Recall that by hypothesis, $\struct$ and $\structbis$ are \FO-similar at depth $f(k)$, which entails that they have the same number of occurrences of each $\type\in\Ntp$ up to a threshold $\thr(k)$.

To construct our two linear orders, we need to define the notion of $k$-environment.

Given $\structgen\in\classe$, a linear order $\ordgen$ on $\structgendom$, $k\in\N$ and an element $\elem\in\structgendom$, we define the \textbf{$k$-environment} $\Envgen$ \textbf{of $\elem$ in $\orderstructgen$} as the restriction of $\orderstructgen$ to the $k$-neighborhood of $\elem$ in $\structgen$, where $\elem$ is the interpretation of the constant symbol $c$. Note that the order is not taken into account when determining the domain of the substructure (it would otherwise be $\structgendom$, given that any two distinct elements are adjacent for $\ordgen$).

The \textbf{$k$-environment type} $\envtpgen$ is the isomorphism class of $\Envgen$.

In other words, $\envtpgen$ contains the information of $\neighgen$ together with the order of its elements in $\orderstructgen$.

Given $\type\in\Ntp$, we define $\Ord$ as the set of $k$-environment types whose underlying $k$-neighborhood type is $\type$.

For $\indin$, we aim to partition $\structinddom$ into $2(2k+1)+2$ segments:
\[\structinddom=\Rareind\cup\bigcup_{j=0}^{2k}(\Leftind\cup\Rightind)\cup\Middleind\,.\]

Once we have set a linear order on each segment, the linear order $\ordind$ on $\structinddom$ will result from the concatenation of the orders on the segments as follows: \[(\structinddom,\ordind):=\Rareind\cdot\Leftind[0]\cdot\Leftind[1]\cdots\Leftind[2k]\cdot\Middleind\cdot\Rightind[2k]\cdots\Rightind[1]\cdot\Rightind[0]\,.\]

Each segment $\Leftind$, for $j\in\{0,\cdots,2k\}$ is itself decomposed into two segments $\NLeftind\cdot\ULeftind$. The $\ULeftind$ for $j\in\{k+1,\cdots,2k\}$ will be empty; they are defined solely in order to keep the notations uniform. The 'N' stands for ``neighbor'' and the 'U' for ``universal'', for reasons that will soon become apparent.

Symmetrically, each $\Rightind$ is decomposed into $\URightind\cdot\NRightind$, with empty $\URightind$ as soon as $j\geq k+1$.

For $\indin$ and $r\in\{0,\cdots,2k\}$, we define $\Segmentind$ as \[\Segmentind:=\Rareind\cup\bigcup_{j=0}^r(\Leftind\cup\Rightind)\,.\]

Let us now explain how the segments are constructed in $\struct$; see Figure~\ref{fig:segments} for an illustration.

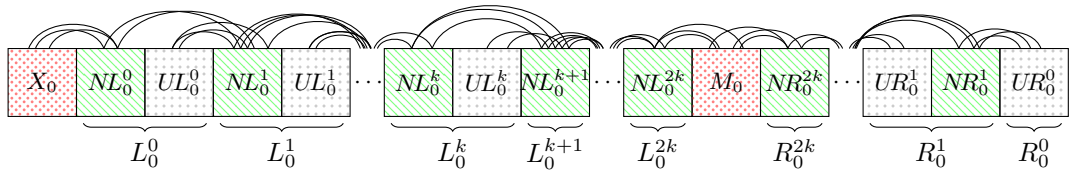
\begin{figure*}[!ht]
  \centering
  \begin{tikzpicture}[scale=.9]

    \newcommand{\topy}{.5}
    \newcommand{\boty}{.5}

    \foreach \b/\e in {1/2,3/4,5.5/6.5,7.5/8.5,9/10,11/12,13.5/14.5}
    \draw[pattern=north west lines, pattern color=green!50] (\b,-\boty) rectangle (\e,\topy);

    \foreach \b/\e in {2/3,4/5,6.5/7.5,12.5/13.5,14.5/15.5}
    \draw[pattern=dots, pattern color=black!20] (\b,-\boty) rectangle (\e,\topy);

    \foreach \b/\e in {0/1,10/11}
    \draw[pattern=crosshatch dots, pattern color=red!50] (\b,-\boty) rectangle (\e,\topy);

    \foreach \a in {5.3,8.8,12.3}
    \node at (\a,0) {\small $\cdots$};

    \node at (.5,0) {\small $\Rare$};
    \node at (1.5,0) {\small $\NLeft[0]$};
    \node at (2.5,0) {\small $\ULeft[0]$};
    \node at (3.5,0) {\small $\NLeft[1]$};
    \node at (4.5,0) {\small $\ULeft[1]$};
    \node at (6,0) {\small $\NLeft[k]$};
    \node at (7,0) {\small $\ULeft[k]$};
    \node at (8,0) {\small $\NLeft[k+1]$};
    \node at (9.5,0) {\small $\NLeft[2k]$};
    \node at (10.5,0) {\small $\Middle$};
    \node at (11.5,0) {\small $\NRight[2k]$};
    \node at (13,0) {\small $\URight[1]$};
    \node at (14,0) {\small $\NRight[1]$};
    \node at (15,0) {\small $\URight[0]$};

    \foreach \b/\e in {
      .3/1.6,.4/1.4,.6/1.6,
      1.4/3.4,1.6/3.6,
      2.4/3.3,2.4/3.8,2.5/3.5,
      3.3/5.15,3.4/5.2,3.5/5.3,3.8/5.25,
      4.4/5.3,4.4/5.31,4.5/5.22,
      5.4/5.9,5.35/6.2,
      5.9/8,5.9/8.2,6.2/7.8,
      7/7.9,7.3/8.1,
      7.8/8.65,7.9/8.6,8.1/8.7,8/8.62, 8.2/8.7,
      8.8/9.45,8.9/9.45,8.85/9.6,
      9.45/10.75,9.6/10.35, 9.45/10.4,
      10.25/11.4,10.6/11.6,10.7/11.4,
      11.4/12.15, 11.6/12.2,
      12.3/12.9,12.35/13.1,12.38/13.9,12.4/14.2,12.38/14.12,
      14.2/15.1,13.9/14.95,14.12/14.8
    }
    \draw[-] (\b,\boty) edge[out=70,in=110,-] (\e,\topy);

    \draw [decorate,decoration={brace,amplitude=3pt},xshift=0pt,yshift=0pt] (2.9,-0.6) -- (1.1,-0.6) node [black,midway,yshift=-0.4cm] {$\Left[0]$};
    \draw [decorate,decoration={brace,amplitude=3pt},xshift=0pt,yshift=0pt] (4.9,-0.6) -- (3.1,-0.6) node [black,midway,yshift=-0.4cm] {$\Left[1]$};
    \draw [decorate,decoration={brace,amplitude=3pt},xshift=0pt,yshift=0pt] (7.4,-0.6) -- (5.6,-0.6) node [black,midway,yshift=-0.4cm] {$\Left[k]$};
    \draw [decorate,decoration={brace,amplitude=3pt},xshift=0pt,yshift=0pt] (8.4,-0.6) -- (7.6,-0.6) node [black,midway,yshift=-0.4cm] {$\Left[k+1]$};
    \draw [decorate,decoration={brace,amplitude=3pt},xshift=0pt,yshift=0pt] (9.9,-0.6) -- (9.1,-0.6) node [black,midway,yshift=-0.4cm] {$\Left[2k]$};
    \draw [decorate,decoration={brace,amplitude=3pt},xshift=0pt,yshift=0pt] (11.9,-0.6) -- (11.1,-0.6) node [black,midway,yshift=-0.4cm] {$\Right[2k]$};
    \draw [decorate,decoration={brace,amplitude=3pt},xshift=0pt,yshift=0pt] (14.4,-0.6) -- (12.6,-0.6) node [black,midway,yshift=-0.4cm] {$\Right[1]$};
    \draw [decorate,decoration={brace,amplitude=3pt},xshift=0pt,yshift=0pt] (15.4,-0.6) -- (14.6,-0.6) node [black,midway,yshift=-0.4cm] {$\Right[0]$};

  \end{tikzpicture}
  \caption{The black curvy edges represent the edges between elements belonging to different segments. Edges between elements of the same segment are not represented here. The order $\ord$ grows from the left to the right.}
  \label{fig:segments}
\end{figure*}

For every $\type\in\Freq$, let $\otypefirst,\cdots,\otypelast$ be an enumeration of $\Ord$.

Recall that we defined $M$ in Section~\ref{sec:freqneigh} as $\max\{|\type|:\type\in\Ntp\}$. Thus, we have $|\Ord|\leq M!$ for every $\type\in\Ntp$.

In particular, by definition of frequency, and by choice of $m$ and $\delta$ in (\ref{eq:m_and_k}), Lemma~\ref{lem:threshold} ensures that we are able to pick, for every $\type\in\Freq$, every $l\in\{1,\cdots,|\Ord|\}$ and every $j\in\{0,\cdots,k\}$, two elements $\pinLeft$ and $\pinRight$ which have $\type$ as $k$-neighborhood type in $\struct$, such that all the $\pin$ are at distance at least $4k+1$ from each other and from any occurrence of a rare $k$-neighborhood type in $\struct$.

\paragraph*{Construction of $\Rare$ and $\NLeft[0]$}
We start with the set $\Rare$, which contains all the occurrences of rare $k$-neighborhood types, together with their $k$-neighborhoods.

Formally, the domain of $\Rare$ is $\bigcup_{\elem\in\structdom :\ \ntp\notin\Freq}\boule\,.$
\\We set $\NLeft[0]:=\Neigh{\Rare}$ (the set of neighbors of elements of $\Rare$), and define the order $\ord$ on $\Rare$ and on $\NLeft[0]$ in an arbitrary way.

\paragraph*{Construction of $\ULeft$}

If $k<j\leq 2k$, then we set $\ULeft:=\emptyset$.

Otherwise, for $j\in\{0,\cdots,k\}$, once we have constructed $\Left[0],\cdots,\Left[j-1]$ and $\NLeft$, we construct $\ULeft$ as follows.

The elements of $\ULeft$ are $\bigcup_{\tau\in\Freq}\bigcup_{l=1}^{|\Ord|}\boule[\pinLeft]\,.$

Note that $\ULeft$ does not intersect the previously constructed segments, by choice of the $\pinLeft$ and of $\delta=4k$ in (\ref{eq:m_and_k}). Furthermore, the $\boule[\pinLeft]$ are pairwise disjoint, hence we can fix $\ord$ freely and independently on each of them.

Unsurprisingly, we order each $\boule[\pinLeft]$ so that $\envtp[\pinLeft]=\otype$. This is possible because for every $\type\in\Freq$ and each $l$, $\ntp[\pinLeft]=\type$ by choice of $\pinLeft$.

Once each $\boule[\pinLeft]$ is ordered according to $\otype$, the linear order $\ord$ on $\ULeft$ can be completed in an arbitrary way.

Note that every possible $k$-environment type extending a frequent $k$-neighborhood type in $\struct$ occurs in each $\ULeft$. The $\ULeft$ are \emph{universal} in that sense.

\paragraph*{Construction of $\NLeft$}
Now, let us see how the $\NLeft$ are constructed. For $j\in\{1,\cdots,2k\}$, suppose that we have constructed $\Left[0],\cdots,\Left[j-1]$.

The domain of $\NLeft$ consists of all the neighbors (in $\struct$) of the elements of $\Left[j-1]$ not already belonging to the construction so far. Formally,
$\Neigh{\Left[j-1]}\setminus\big(\Rare\cup\bigcup_{m=0}^{j-2}\Left[m]\big).$

The order $\ord$ on $\NLeft$ is chosen arbitrarily.

\paragraph*{Construction of $\Right$}

We construct similarly the $\Right$, for $j\in\{0,\cdots,2k\}$, starting with $\NRight[0]:=\emptyset$, then $\URight[0]$ which contains each $\pinRight[\otype][0]$ together with its $k$-neighborhood in $\struct$ ordered according to $\otype$, then $\NRight[1]:=\Neigh{\Right[0]}$, then $\URight[1]$, etc.

Note that the $\pinRight$ have been chosen so that they are far enough in $\struct$ from all the segments that have been constructed so far, allowing us once more to order their $k$-neighborhood in $\struct$ as we see fit.
  
\paragraph*{Construction of $\Middle$}

$\Middle$ contains all the elements of $\structdom$ besides those already belonging to $\Segment[2k]$.
The order $\ord$ chosen on $\Middle$ is arbitrary.

\paragraph*{Transfer on $\structbis$}

Suppose that we have constructed $\Segment[2k]$.

We can make sure, retrospectively, that the index $f(k)$ in (\ref{eq:foeq}) is large enough so that there exists a set $S\subseteq\structbisdom$ so that $\struct|_{\Segment[2k]\cup\Neigh{\Segment[2k]}}\simeq\structbis|_{S}$ (this is ensured as long as $f(k)\geq|\Segment[2k]\cup\Neigh{\Segment[2k]}|+1$, which can be bounded by a function of $k$, independent of $\struct$ and $\structbis$).

Let $\bijsymb:\struct|_{\Segment[2k]}\to\structbis|_{S'}$ be the restriction to $\Segment[2k]$ of said isomorphism, and let $\bijsymb[1]$ be its converse. By construction, the $k$-neighborhood of every $\elem\in\Segment[k]$ is included in $\Segment[2k]$; hence every such $\elem$ has the same $k$-neighborhood type in $\struct$ as has $\bij[\elem]$ in $\structbis$.

We transfer alongside $\bijsymb$ all the segments, with their order, from $\orderstruct$ to $\structbis$, thus defining $\Rarebis, \NLeftbis, \ULeftbis,\cdots$ as the respective images by $\bijsymb$ of $\Rare,\NLeft,\ULeft,\cdots$, and define $\Middlebis$ as the counterpart to $\Middle$. Note that the properties concerning neighborhood are transferred; e.g. all the neighbors of an element in $\Leftbis$, $1\leq j<2k$, belong to $\Leftbis[j-1]\cup\Leftbis\cup\Leftbis[j+1]\,.$

By construction, we get the following lemma:

\begin{lemma}
  \label{lem:envpres}
  For each $\elem\in\Segment[k],$ we have $\envtp=\envtpbis[\bij[\elem]]\,.$
\end{lemma}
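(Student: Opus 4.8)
The plan is to unwind the definition of the $k$-environment type and show that all three ingredients—the underlying $k$-neighborhood as an unordered $(\vocab\cup\{c\})$-structure, the choice of center, and the linear order on the neighborhood—are preserved by $\bijsymb$ for every $\elem\in\Segment[k]$. Since $\envtp$ is the isomorphism class of the restriction of $\orderstruct$ to $\boule$ (with $c\mapsto\elem$), and $\envtpbis[\bij[\elem]]$ is the isomorphism class of the restriction of $\orderstructbis$ to $\boulebis[\bij[\elem]]$ (with $c\mapsto\bij[\elem]$), it suffices to exhibit an isomorphism between these two ordered pointed structures, and the natural candidate is $\bijsymb$ itself, suitably restricted.

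First I would recall that $\bijsymb$ is by construction the restriction to $\Segment[2k]$ of an isomorphism $\struct|_{\Segment[2k]\cup\Neigh{\Segment[2k]}}\simeq\structbis|_S$. Next I would argue that for $\elem\in\Segment[k]$, the whole $k$-neighborhood $\boule$ lies inside $\Segment[2k]$: indeed, starting from $\Segment[k]$ and following edges of the Gaifman graph, each step lands us in the next $\Left[j]$ or $\Right[j]$ (this is exactly the neighborhood-closure property recorded in the construction of the $\NLeft$ and $\NRight$ segments), so after at most $k$ steps we are still within $\Segment[2k]$. Hence $\bijsymb$ is defined on all of $\boule$, and since it is (the restriction of) an isomorphism on $\Segment[2k]\cup\Neigh{\Segment[2k]}$, it maps $\boule$ bijectively onto $\boulebis[\bij[\elem]]$ and preserves all $\vocab$-relations; it also sends the center $\elem$ to the center $\bij[\elem]$, so it is an isomorphism of the pointed $\vocab$-neighborhoods $\neigh\simeq\neighbis[\bij[\elem]]$.

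It then remains to check that $\bijsymb$ respects the order, i.e. that for $\elemter,\elemqua\in\boule$ we have $\elemter\ord\elemqua$ iff $\bij[\elemter]\ordbis\bij[\elemqua]$. This is immediate from the \emph{Transfer on $\structbis$} paragraph: the order $\ordbis$ on $\structbisdom$ was defined precisely by transporting $\ord$ along $\bijsymb$ on the segments constituting $\Segment[2k]$ (the segments $\Rarebis,\NLeftbis,\ULeftbis,\dots$ are the $\bijsymb$-images of $\Rare,\NLeft,\ULeft,\dots$ with their orders), and both $\elemter$ and $\elemqua$ lie in $\Segment[2k]$. Thus $\bijsymb$ restricted to $\boule$ is an isomorphism of ordered pointed $(\vocab\cup\{c\})$-structures from $\Env$ to $\Envbis[\bij[\elem]]$, which is exactly the assertion $\envtp=\envtpbis[\bij[\elem]]$.

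I do not expect any serious obstacle here; the only point requiring a little care is the containment $\boule\subseteq\Segment[2k]$ for $\elem\in\Segment[k]$, which relies on the inductive neighbor-closure built into the definition of the segments (each $\Left[j]$ for $j\ge 1$ absorbs the not-yet-placed neighbors of $\Left[j-1]$, and symmetrically for $\Right[j]$), together with the fact that $\bijsymb$ was chosen as the restriction of an isomorphism defined on $\Segment[2k]\cup\Neigh{\Segment[2k]}$ rather than merely on $\Segment[2k]$, so that neighborhood types—and not just induced substructures—are genuinely preserved.
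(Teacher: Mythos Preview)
Your proposal is correct and is precisely the argument the paper has in mind: the paper states the lemma as following ``by construction'' after explicitly noting that the $k$-neighborhood of every $\elem\in\Segment[k]$ is contained in $\Segment[2k]$ and that the segments (with their orders) are transferred along $\bijsymb$. You have simply spelled out these two ingredients in detail, so there is nothing to add.
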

Lemma~\ref{lem:envpres} has two immediate consequences:
\begin{itemize}
\item The set $\Rarebis$ contains the occurrences in $\structbis$ of all the rare $k$-neighborhood types (just forget about the order on the $k$-environments, and remember that $\struct$ and $\structbis$ have the same number of occurrences of each rare $k$-neighborhood type).
\item All the universal segments $\ULeftbis$ and $\URightbis$, for $0\leq j\leq k$, contain at least one occurrence of each environment in $\Ord$, for each $\type\in\Freq$.
\end{itemize}

Our construction also guarantees the following result:

\begin{lemma}
  \label{lem:tppres}
  For each $\elem,\elembis\in\Segment[k],$ we have $\tpgen[\elem,\elembis][\orderstruct]=\tpgen[\bij[\elem],\bij[\elembis]][\orderstructbis]\,.$
\end{lemma}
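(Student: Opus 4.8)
The plan is to decompose the atomic type $\tpgen[\elem,\elembis][\orderstruct]$ into its order component $\ordtpgen$ (one of $\{x<y\}$, $\{x>y\}$, $\{x=y\}$) and its $\vocab$-component $\tpgen[\elem,\elembis]$, and to show that $\bijsymb$ preserves each of them. The $\vocab$-component is the easy half: recall that $\bijsymb$ is the restriction to $\Segment[2k]$ of an isomorphism from $\struct|_{\Segment[2k]\cup\Neigh{\Segment[2k]}}$ onto $\structbis|_S$, and since $\elem,\elembis\in\Segment[k]\subseteq\Segment[2k]$ both lie in the domain of this (larger) isomorphism, which agrees with $\bijsymb$ on them, and since an isomorphism preserves the truth value of every atomic $\vocab$-formula evaluated at a tuple of its domain, we obtain $\tpgen[\elem,\elembis]=\tpgen[\bij[\elem],\bij[\elembis]]$. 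Moreover, if $\elem=\elembis$ then $\bij[\elem]=\bij[\elembis]$ by injectivity of $\bijsymb$, so the order component is $\{x=y\}$ on both sides and we are done; from now on I assume $\elem\neq\elembis$, so it remains to prove that $\elem<\elembis$ in $\ord$ iff $\bij[\elem]<\bij[\elembis]$ in $\ordbis$.

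For the order component, the key facts are that $\ord$ is the concatenation $\Rare\cdot\Left[0]\cdots\Left[2k]\cdot\Middle\cdot\Right[2k]\cdots\Right[0]$, that $\ordbis$ is built by the identical recipe (concatenating $\Rarebis,\Leftbis[0],\dots,\Leftbis[2k],\Middlebis,\Rightbis[2k],\dots,\Rightbis[0]$), and that each segment constituting $\Segment[2k]$ — namely $\Rare$ and the $\Left[j],\Right[j]$ with $j\leq 2k$ — is mapped by $\bijsymb$ onto its primed counterpart \emph{together with its internal order}, since the construction transfers the segments with their orders. Now $\Segment[k]$ is partitioned into the buckets $\Rare,\Left[0],\dots,\Left[k],\Right[0],\dots,\Right[k]$, each of which is one of these transferred segments, so $\bijsymb$ sends the bucket containing $\elem$ onto the primed bucket containing $\bij[\elem]$, and likewise for $\elembis$. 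If $\elem$ and $\elembis$ lie in the same bucket $X$, then since $\ordbis$ restricted to $\bij[X]$ is the $\bijsymb$-image of $\ord$ restricted to $X$, the two comparisons agree. If they lie in distinct buckets, their $\ord$-comparison is governed solely by the relative position of these buckets in the concatenation: inside $\Segment[k]$ the buckets occur in the order $\Rare,\Left[0],\dots,\Left[k]$ and then — past $\Left[k+1],\dots,\Left[2k],\Middle,\Right[2k],\dots,\Right[k+1]$, none of which meets $\Segment[k]$ — the buckets $\Right[k],\Right[k-1],\dots,\Right[0]$ in decreasing order of index. The concatenation defining $\ordbis$ arranges the primed buckets in exactly this same order, so $\elem<\elembis$ in $\ord$ iff $\bij[\elem]<\bij[\elembis]$ in $\ordbis$ here as well.

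Combining the two components gives $\tpgen[\elem,\elembis][\orderstruct]=\tpgen[\bij[\elem],\bij[\elembis]][\orderstructbis]$, as wanted. I do not expect a genuine obstacle here — the statement is essentially bookkeeping — but the points that deserve a little care are that the internal orders of the segments of $\structbis$ are the $\bijsymb$-images of those of $\struct$ (rather than freshly chosen), that the indices of the $\Right[j]$ are reversed inside the concatenation, and that the arbitrary order placed on $\Middlebis$ is irrelevant for this lemma precisely because $\Middle$ — hence $\Middlebis$ — is disjoint from every bucket constituting $\Segment[k]$.
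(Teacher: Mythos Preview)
Your proof is correct and is exactly the natural unpacking of what the paper leaves implicit: the paper states Lemma~\ref{lem:tppres} as a direct consequence of the construction (``Our construction also guarantees the following result'') without spelling out a proof, and your decomposition into the $\vocab$-component (preserved because $\bijsymb$ is the restriction of an isomorphism of the induced substructures) and the order component (preserved because the segments are transferred with their internal orders and concatenated in the same pattern on both sides) is precisely the intended argument.
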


In particular, for $\elem=\elembis\in\Segment[k]$, we have $\tp[\elem]=\tpbis[\bij[\elem]]\,.$

\section{\texorpdfstring{Proof of the $\FOtwo$-similarity of $\orderstruct$ and $\orderstructbis$}{Proof of the FO2-similarity of (A0,<0) and (A1,<1)}}
\label{sec:EFproof}

In this section, we aim to show the following result:

\begin{proposition}
  \label{prop:fotwoeq}
  We have that $\orderstruct\fotwoeq{k}\orderstructbis\,.$
\end{proposition}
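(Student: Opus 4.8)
## Proof Plan for Proposition~\ref{prop:fotwoeq}

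The plan is to prove $\orderstruct\fotwoeq{k}\orderstructbis$ via the Ehrenfeucht–Fraïssé game characterization of \FOtwo-similarity at depth $k$: Duplicator must survive $k$ rounds of the two-pebble game on $\orderstruct$ and $\orderstructbis$. Recall that in the \FOtwo-game only two pebble pairs are available, and at each round Spoiler may \emph{move} one of the (at most two) currently placed pebbles — so the position is always a pair $(\elem,\elembis)$ in each structure, and Duplicator must maintain that the map is a partial isomorphism with respect to $\vocab\cup\{<\}$, i.e. preserves the ordered atomic type $\tpgen[\cdot,\cdot][\orderstruct]$ of the pebbled pair.

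The key idea is to track, as an invariant, not just the atomic type of the pebbled pair but \emph{where} each pebbled element sits relative to the segment decomposition, together with its ordered neighborhood type up to the remaining radius. Concretely, after $r$ rounds remain, I would maintain that for each pebbled element $\elem$ in $\orderstruct$ with image $\elem'$ in $\orderstructbis$: (i) the ordered atomic type of the pebbled pair agrees; (ii) $\elem$ and $\elem'$ have the same $r$-environment type (which by Lemma~\ref{lem:envpres} and Lemma~\ref{lem:tppres} is automatic when both lie in $\Segment[k]$, and must be arranged by hand otherwise, in particular in the $\Middle$ region); and (iii) a ``positional'' compatibility: if $\elem$ lies in segment $\Leftind$ (resp. $\Rightind$, $\Rare$, $\Middle$) then $\elem'$ lies in the corresponding segment, \emph{or} both lie deep enough inside $\Middle$ and $\Middlebis$ (far from $\Segment[2k]$) that the remaining game cannot reach the boundary. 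The design of the segments — $2k{+}1$ layers of $\Left$'s and $\Right$'s, each split into a ``neighbor'' part $\NLeft$ absorbing the Gaifman-neighbors of the previous layer and a ``universal'' part $\ULeft$ realizing \emph{every} environment type over a frequent neighborhood type — is exactly what makes this invariant restorable: when Spoiler moves a pebble to a neighbor of the current element, the neighbor lies one layer further in (this is why we need $2k$ layers: $k$ rounds can push us $k$ layers, and we built in a safety margin of another $k$), and when Spoiler plays a fresh element far from everything, its neighborhood type is frequent (rare types are all buried in $\Rare$ with their neighborhoods), so Duplicator can answer inside the appropriate universal segment with the matching ordered neighborhood type.

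I would carry out the argument by induction on the number of rounds remaining, with a careful case analysis on Spoiler's move: (a) Spoiler moves a pebble to an element adjacent in the Gaifman graph to a currently pebbled element — Duplicator follows the segment-successor structure and uses the equality of environment types to pick the mirror element; (b) Spoiler moves a pebble far (distance $>2r$, say) from all pebbled elements — if Spoiler's new element sits in a frequent neighborhood type, Duplicator plays into a not-yet-used portion of a universal segment $\ULeftbis$ with the same ordered $r$-neighborhood type; if it sits in a rare type, it is in $\Rare$ (or rather its neighborhood is), and Duplicator uses that $\struct$ and $\structbis$ have the same occurrence count of each rare type and that the whole of $\Rare$ was transferred isomorphically; (c) intermediate distances are handled by noting $m=2(k+1)M!$ gives enough disjoint occurrences of each environment type that Duplicator never runs out. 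Throughout, the order between the new pebble and the old one is dictated by which segment each lands in, and segments are concatenated in the same sequence in both structures, so the order atom is preserved.

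The main obstacle I expect is the bookkeeping in case (a) combined with the $\Middle$ region: when a pebble is in $\Middle$ (which is \emph{not} controlled by the transfer isomorphism $\bijsymb$, since $\Middle$ only satisfies $\struct\foeq{f(k)}\structbis$), Duplicator must still answer moves to its neighbors, and must ensure that walking $r$ steps from a $\Middle$-element never forces an inconsistency with the $\Rightind$ layers on the other side. This is why $\Middle$ is flanked symmetrically by $\Right[2k]\cdots\Right[0]$ mirroring $\Left[0]\cdots\Left[2k]$: a short walk from $\Middle$ enters the $\Right$ layers in a way structurally identical to a walk into the $\Left$ layers, and the $f(k)\geq\thr(k)\cdot N+1$ slack guarantees enough material in $\Middle$ itself. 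Making precise the statement ``the remaining $r$-round game from a $\Middle$-pebble cannot distinguish $\orderstruct$ from $\orderstructbis$'' — essentially a local-isomorphism/Hanf-type argument relativized to the $r$-neighborhood, but now in the \emph{ordered} structure — is the technical heart, and I would isolate it as a separate lemma (the natural candidate being: the $k$-environments exhaust all possibilities on frequent types, so any ordered local configuration Spoiler can build is matchable).
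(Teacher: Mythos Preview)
Your overall architecture (two-pebble EF game, maintain an $r$-environment-type invariant that shrinks with the number of rounds, answer ``far'' moves in universal segments, tit-for-tat near the endpoints) is exactly the paper's. But two pieces of your plan are off and would block the induction.

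First, invariant (iii) is wrong. The paper does \emph{not} maintain that corresponding pebbles lie in the same segment; indeed your own case (b) already breaks (iii), since when Spoiler plays in $\Middle$ you send Duplicator into some $\ULeftbis[r+1]$. The correct positional invariant is the shrinking-border condition $(S_r)$: \emph{if} a pebble lies in $\Segmentind[r]$ (the union of $\Rareind$ and the outer $r{+}1$ left/right layers), \emph{then} its partner is its image under $\bijsymb[\ind]$. Nothing is asserted about pebbles outside $\Segmentind[r]$. This is what lets you answer a $\Middle$-move by an element of $\Leftbis[r+1]$ without violating anything.

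Second, and relatedly, the ``main obstacle'' you isolate for $\Middle$ is a red herring, and the Hanf-style lemma you propose would not go through. The order on $\Middle$ and on $\Middlebis$ is chosen \emph{arbitrarily}, so you have no control whatsoever over which environment types are realised there or with what multiplicity; a local-count argument inside $\Middle$ cannot get off the ground. The paper never tries to mirror a $\Middle$-pebble by a $\Middlebis$-pebble. Instead, a move in $\Middle$ is handled exactly like any other non-adjacent, non-border move: Duplicator jumps to the fresh universal segment $\ULeftbis[r+1]$ or $\URightbis[r+1]$ (one such segment per remaining round --- this, not ``enough disjoint copies for intermediate distances'', is what the factor $2(k{+}1)$ in $m$ buys), realising the required $k$-environment type. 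Subsequent adjacency moves are then handled \emph{uniformly} by the environment-type invariant $(E_{r})$ alone, via the isomorphism between $(r{+}1)$-environments; no segment bookkeeping and no separate lemma is needed. Accordingly, the right case split is not by Gaifman distance (adjacent / far / intermediate) but: (I) in $\Segment[r]$; (II) adjacent to the other pebble; (III)--(VI) to the left/right of the other pebble, with a subcase depending on whether the stationary pebble already sits in $\Left[r+1]$ (resp.\ $\Right[r+1]$). Your case (c) does not exist.
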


\subsection{The two-pebble \EF game}
\label{sec:EF}

To establish Proposition~\ref{prop:fotwoeq}, we use \EF games with two pebbles. These games have been introduced by Immerman and Kozen~\cite{DBLP:journals/iandc/ImmermanK89}. Let us adapt their definition to our context.

The \textbf{$k$-round two-pebble \EF game on $\orderstruct$ and $\orderstructbis$} is played by two players: the spoiler and the duplicator. The spoiler tries to expose differences between the two structures, while the duplicator tries to establish their indistinguishability.

There are two pebbles associated with each structure: $\pebbleone$ and $\pebbletwo$ on $\orderstruct$, and $\pebbleonebis$ and $\pebbletwobis$ on $\orderstructbis$. Formally, these pebbles can be seen as the interpretations in each structure of two new constant symbols, but it will be convenient to see them as moving pieces.

At the start of the game, the duplicator places $\pebbleone$ and $\pebbletwo$ on elements of $\orderstruct$, and $\pebbleonebis$ and $\pebbletwobis$ on elements of $\orderstructbis$. The spoiler wins if the duplicator is unable to ensure that $\tp=\tpbis$. Otherwise, the proper game starts. Note that in the usual definition of the starting position, the pebbles are not on the board; however, it will be convenient to have them placed in order to uniformize our invariant. This change is not profound and does not affect the properties of the game.

For each of the $k$ rounds, the spoiler starts by choosing a structure and a pebble in this structure, and places this pebble on a element of the chosen structure. In turn, the duplicator must place the corresponding pebble in the other structure on an element of that structure. The spoiler wins at once if $\tp\neq\tpbis$. Otherwise, another round is played.

If the spoiler hasn't won after $k$ rounds, then the duplicator wins.

The main interest of these games is that they capture the expressive power of \FOtwo. We will only need the fact that these games are correct:

\begin{theorem}
  \label{th:EF}
  If the duplicator has a winning strategy in the $k$-round two-pebble \EF game on $\orderstruct$ and $\orderstructbis$, then $\orderstruct\fotwoeq{k}\orderstructbis\,.$
\end{theorem}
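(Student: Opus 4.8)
The plan is to prove, by induction on the number of remaining rounds, that a surviving duplicator strategy transfers the satisfaction of low-quantifier-rank $\FOtwo$-formulas between the two structures. Since $\FOtwo$ has only the two variables $\var$ and $\varbis$, a game position is completely described by the pair of pebbled elements in each structure; moving the pebble $\pebbleone$ (respectively $\pebbletwo$) corresponds exactly to re-quantifying the variable $\var$ (respectively $\varbis$), overwriting its previous value. This tight match between pebble reuse and variable reuse is precisely what makes the two-pebble game capture $\FOtwo$, and it is the backbone of the argument.

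Concretely, I will establish the following statement by induction on $m$. For every $m \le k$ and every position in which the pebbles sit on $a_1, a_2 \in \structdom$ and $b_1, b_2 \in \structbisdom$ with $\tp = \tpbis$ (i.e. the pebbled pairs have the same atomic type in $\orderstruct$ and $\orderstructbis$), if the duplicator has a strategy to survive $m$ further rounds, then for every $\FOtwo$-formula $\formule$ of quantifier rank at most $m$ with free variables among $\{\var, \varbis\}$,
\[
\orderstruct, \var \mapsto a_1, \varbis \mapsto a_2 \models \formule
\quad\Longleftrightarrow\quad
\orderstructbis, \var \mapsto b_1, \varbis \mapsto b_2 \models \formule\,.
\]
For the base case $m = 0$, the formula $\formule$ is a Boolean combination of atomic formulas with free variables in $\{\var, \varbis\}$; since $\tp = \tpbis$ guarantees that each such atomic formula receives the same truth value under the two assignments, so does $\formule$.

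For the inductive step I normalize $\formule$ as a Boolean combination of atomic formulas and formulas of the shape $\exists \var\, \psi$ or $\exists \varbis\, \psi$ with $\psi$ of quantifier rank at most $m - 1$ (universal quantifiers being absorbed into the Boolean closure via $\forall z\,\psi = \neg\exists z\,\neg\psi$). By Boolean closure it suffices to treat one existentially quantified conjunct, say $\exists\var\,\psi$. If $\orderstruct, \var\mapsto a_1, \varbis\mapsto a_2 \models \exists\var\,\psi$, witnessed by some $a_1'$, I read this as the spoiler moving $\pebbleone$ to $a_1'$ in $\orderstruct$. The duplicator's surviving strategy supplies a response $b_1'$ for $\pebbleonebis$ such that the resulting position again satisfies $\tp = \tpbis$ and the duplicator can still survive $m-1$ rounds. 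The induction hypothesis applied to $\psi$ (whose free variables remain among $\{\var, \varbis\}$, with $\varbis$ still interpreted as $a_2$, resp. $b_2$) then yields $\orderstructbis, \var\mapsto b_1', \varbis\mapsto b_2 \models \psi$, hence $\orderstructbis \models \exists\var\,\psi$ under the original assignment. The converse direction is symmetric, with the spoiler moving in $\orderstructbis$, and the $\exists\varbis$ case is identical with the roles of the two pebbles exchanged.

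To conclude, I apply the statement with $m = k$ to the starting position: a duplicator winning strategy must place the initial pebbles so that $\tp = \tpbis$ (otherwise the spoiler wins at once) and then survive all $k$ rounds. The statement thus gives agreement on every $\FOtwo$-formula of quantifier rank at most $k$ with free variables among $\{\var, \varbis\}$; specializing to sentences, whose truth value does not depend on the pebble placement, yields $\orderstruct \fotwoeq{k} \orderstructbis$. I expect the only delicate point to be the bookkeeping of free variables through the quantifier step — ensuring that binding $\var$ (or $\varbis$) and moving the matching pebble keep the untouched variable's value and the remaining free variables in exact correspondence — together with the routine normalization that lets the induction peel off one quantifier at a time.
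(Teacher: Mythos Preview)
Your proof is correct and is the standard inductive argument showing correctness of pebble games. The paper, however, does not prove Theorem~\ref{th:EF} at all: it is stated as a known result, attributed to Immerman and Kozen, and only the direction needed (correctness, not completeness) is invoked. So there is nothing to compare against beyond noting that you have supplied the classical proof that the paper elects to cite rather than reproduce.
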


Thus, in order to prove Proposition~\ref{prop:fotwoeq}, we show that the duplicator wins the $r$-round two-pebble \EF game on $\orderstruct$ and $\orderstructbis\,.$

For that, let us show by a decreasing induction on $r=k,\cdots,0$ that the duplicator can ensure, after $k-r$ rounds, that the three following properties (described below) hold:
\begin{align}
  &\forall\indin,\forall\epsin,\ \pebblegen\in\Segmentind\ \to\ \pebblegen[1-\ind]=\bijgen[\pebblegen] \tag{$\labsr$}\\
  &\forall\epsin,\ \envtp[\pebblegen[0]][r]=\envtpbis[\pebblegen[1]][r] \tag{$\laber$}\\
  &\tp=\tpbis \tag{$\labtr$}
\end{align}
The first property, \refsr, guarantees that if a pebble is close (in a sense that depends on the number of rounds left in the game) to one of the $\ordind$-minimal or $\ordind$-maximal elements, the corresponding pebble in the other structure is located at the same position with respect to this $\ordind$-extremal element.

As for \refer, it states that two corresponding pebbles are always placed on elements sharing the same $r$-environment type. Once again, the satefy distance decreases with each round that goes.

Finally, \reftr controls that both pebbles have the same relative position (both with respect to the order and the original vocabulary) in the two ordered structures.

In particular, the duplicator wins the game if \reftr is satisfied at the begining of the game, and after each of the $k$ rounds of the game.

\subsection{\texorpdfstring{Base case: proofs of \refsr[k], \refer[k] and \reftr[k]}{Base case: proofs of (Sk), (Ek) and (Rk)}}

We start by proving \refsr[k], \refer[k] and \reftr[k].

At the start of the game, the duplicator places both $\pebbleone$ and $\pebbletwo$ on the $\ord$-minimal element of $\orderstruct$, and both $\pebbleonebis$ and $\pebbletwobis$ on the $\ordbis$-minimal element of $\orderstructbis$. In particular,
\[\pebbleonebis=\pebbletwobis=\bij=\bij[\pebbletwo]\,.\]

This ensures that \refsr[k] holds, while \refer[k] and \reftr[k] respectively follow from Lemma~\ref{lem:envpres} and Lemma~\ref{lem:tppres}.

\subsection{Strategy for the duplicator}
\label{subsec:strategy}

We now describe the duplicator's strategy to ensure that \refsr, \refer and \reftr hold no matter how the spoiler plays.

Suppose that we have \refsr[r+1], \refer[r+1] and \reftr[r+1] for some $0\leq r < k$, after $k-r-1$ rounds of the game.

Without loss of generality, we may assume that, in the $(k-r)$-th round of the \EF game between $\orderstruct$ and $\orderstructbis$, the spoiler moves $\pebbleone$ in $\orderstruct$.

Let us first explain informally the general idea behind the duplicator's strategy.

\begin{enumerate}
\item\label{enu:expl_border} If the spoiler plays around the endpoints (by which we mean the elements that are $\ordind$-minimal and maximal), the duplicator has no choice but to play a tit-for-tat strategy, i.e. to respond to the placement of $\pebblegen$ near the endpoints by moving $\pebblegen[1-\ind]$ on $\bijgen[\pebblegen]$.

  There are two main reasons which explain these forced moves. 

  First, if at some point, $\pebbleone$ and $\pebbleonebis$ are placed near the endpoints of their respective ordered structures (say, at respective distance $d_0,d_1$ along $\ordind$ from the $\ordind$-maximal element), but not at the same $\ordind$-distance from the endpoints (say $d_0<d_1$), then the spoiler can play in the structure $\orderstructbis$ by placing $\pebbletwobis$ on the element that is the $\ordbis$-successor of $\pebbleonebis$. The duplicator will have to place $\pebbletwo$ to the right of $\pebbleone$. Then the duplicator follows this move by placing $\pebbleonebis$ on the $\ordbis$-successor of $\pebbletwobis$, and so on. After at most $d_1$ such moves, it won't be possible for the duplicator to find a suitable element to the right of the current position of the pebbles in $\orderstruct$. Of course, the duplicator must not only maintain $\ordtp=\ordtpbis$ at all time, but also make sure that $\vocabtp=\vocabtpbis$. Hence, a small right shift by the spoiler in $\orderstructbis$ may force the duplicator to make a big right shift in $\orderstruct$, and it might take less than $d_1$ rounds for the spoiler to win the game in this situation; in other words, the borders around the endpoints in which the duplicator must apply the tit-for-tat strategy are larger than just the number of rounds in the game.

  Second, the occurrences of rare neighborhood types are located in $\orderstructind$ near the $\ordind$-minimal element. If the duplicator does not play according to $\bijsymb$ in this area, it will be easy enough for the spoiler to win the game.

  The reason we introduced the segments $\NLeftind,\ULeftind,\NRightind$ and $\URightind$ is precisely to bound the area in which the duplicator must implement the tit-for-tat strategy. Indeed, as soon as a pebble is placed in $\Middleind$, there is no way for the spoiler to join the endpoints in less than $k$ moves while forcing the duplicator's hand.

  The case where the spoiler plays near the endpoints corresponds to Case~(\ref{enu:border}) below, and is detailed in Section~\ref{sec:border}.

\item\label{enu:expl_neigh} Next, suppose that the spoiler places a pebble, say $\pebbleone$, next (in $\struct$) to $\pebbletwo$, i.e. such that $\pebbleone\in\voisdom$. The duplicator must place $\pebbleonebis$ on an element whose relative position to $\pebbletwobis$ is the same as the relative position of $\pebbleone$ with respect to $\pebbletwo$. Note that once this is done, the spoiler can change variable, and place $\pebbletwo$ (or $\pebbletwobis$, if they decide to play in $\orderstructbis$) in $\voisdom[\pebbleone]$, thus forcing the duplicator to play near $\pebbleonebis$. In order to prevent the spoiler from being able, in $k$ such moves, to expose the difference between $\orderstruct$ and $\orderstructbis$, the duplicator must make sure, with $r$ rounds left, that $\pebbleone$ and $\pebbleonebis$ (as well as $\pebbletwo$ and $\pebbletwobis$) share the same $r$-environment in $\orderstruct$ and $\orderstructbis$. This will guarantee that the duplicator can play along if the spoiler decides to take $r$ moves adjacent (in $\structind$) to one another.

  The case where the spoiler places a pebble next (in the structure without ordering) to the other pebble is our Case~(\ref{enu:neigh}), and is treated in Section~\ref{sec:neigh}.

\item\label{enu:expl_left} Suppose now that the spoiler's move does not fall under the previous templates. Let us assume that the spoiler plays in $\orderstruct$, and moves $\pebbleone$ to the left of $\pebbletwo$ (i.e. such that $\orderstruct\models\pebbleone<\pebbletwo$).

  In order to play according to the remarks from Cases~\ref{enu:expl_border} and \ref{enu:expl_neigh}, the duplicator must place $\pebbleonebis$ on an element which shares the same $r$-environment with $\pebbleone$ (where $r$ is the number of rounds left in the game), which is not near the endpoints.

  It must be the case that the $k$-neighborhood type of $\pebbleone$ in $\struct$ is frequent, since it is not near the endpoints of $\orderstruct$, hence not in $\Rare$. By construction, every universal segment $\ULeftbis$, for $0\leq j\leq k$, contains elements of each $k$-environment type extending any frequent $k$-neighborhood type. In particular, it contains an element having the same $r$-environment as $\pebbleone$. The duplicator will place $\pebbleonebis$ on such an element in the leftmost segment $\ULeftbis$ which is not considered to be near the endpoints (this notion depends on the number $r$ of rounds left in the game). This is detailed in Cases~(\ref{enu:farleft}) and (\ref{enu:farright}) (for the symmetrical case where $\pebbleone$ is placed to the right of $\pebbletwo$) below.

  However, we have to consider a subcase, where $\pebbletwobis$ is itself in the leftmost segment $\Leftbis$ which is not near the endpoints. Indeed, in this case, placing $\pebbleonebis$ as discussed may result in $\pebbleonebis$ being to the right of $\pebbletwobis$, or being in $\voisdombis$; either of which being game-losing to the duplicator. However, since $\pebbletwobis$ was considered to be near the endpoints in the previous round of the game, we know that the duplicator played a tit-for-tat strategy at that point, which allows us to replicate the placement of $\pebbleone$ according to $\bijsymb$. This subcase, as well as the equivalent subcase where the spoiler places $\pebbleone$ to the right of $\pebbletwo$, are formalized in Cases~(\ref{enu:left}) and (\ref{enu:right}) below.

\end{enumerate}

We are now ready to describe formally the strategy implemented by the duplicator:

\begin{enumerate}[(I)]
\item\label{enu:border} If $\pebbleone\in\Segment$, then the duplicator responds by placing $\pebbleonebis$ on $\bij$.

  This corresponds to the tit-for-tat strategy implemented when the spoiler plays near the endpoints, as discussed in Case~\ref{enu:expl_border}.

\item\label{enu:neigh} Else, if $\pebbleone\notin\Segment$, and $\pebbleone\in\voisdom$, then \refer[r+1] ensures that there exists an isomorphism $\psi:\Env[\pebbletwo][r+1]\to\Envbis[\pebbletwobis][r+1]\,.$ The duplicator responds by placing $\pebbleonebis$ on $\psi(\pebbleone)$.

  This makes formal the duplicator's response to a move next to the other pebble, as discussed in Case~\ref{enu:expl_neigh} above.

\item\label{enu:farleft} Else suppose that $\orderstruct\models\pebbleone<\pebbletwo$ and $\pebbletwo\notin\Left[r+1]$. Note that $\type:=\ntp[\pebbleone]\in\Freq$, since $\pebbleone\notin\Rare$. Let $\otype:=\envtp[\pebbleone]$.
  
  The duplicator responds by placing $\pebbleonebis$ on $\bij[\pinLeft[\otype][r+1]]$.

\item\label{enu:left} Else, if $\orderstruct\models\pebbleone<\pebbletwo$ and $\pebbletwo\in\Left[r+1]$, then the duplicator moves $\pebbleonebis$ on $\bij$ (by \refsr[r+1], $\pebbleone$ indeed belongs to the domain of $\bijsymb$).

\item\label{enu:farright} Else, suppose that $\orderstruct\models\pebbletwo<\pebbleone$ and $\pebbletwo\notin\Right[r+1]$. This case is symmetric to Case~(\ref{enu:farleft}).

  Similarly, the duplicator opts to play $\pebbleonebis$ on $\bij[\pinRight[\otype][r+1]]$, where $\otype:=\envtp[\pebbleone]$.

\item\label{enu:right} If we are in none of the cases above, it means that the spoiler has placed $\pebbleone$ to the right of $\pebbletwo$, and that $\pebbletwo\in\Right[r+1]$. This case is symmetric to Case~(\ref{enu:left}).

  Once again, the duplicator places $\pebbleonebis$ on $\bij$.
\end{enumerate}

Let us now prove that this strategy indeed satisfies our invariants: under the inductive assumption that \refsr[r+1], \refer[r+1] and \reftr[r+1] hold, for some $0\leq r<k$, we show that this strategy ensures that \refsr, \refer and \reftr hold.

We treat each case in its own section: Section~\ref{sec:border} is devoted to Case~(\ref{enu:border}) while Section~\ref{sec:neigh} covers Case~(\ref{enu:neigh}). Both Cases~(\ref{enu:farleft}) and (\ref{enu:left}) are treated in Section~\ref{sec:left}. Cases~(\ref{enu:farright}) and (\ref{enu:right}), being their exact symmetric counterparts, are left to the reader.

\begin{note}
  \label{note:y}
Note that some properties need no verification. Since $\pebbletwo$ and $\pebbletwobis$ are left untouched by the players, \refsr[r+1] ensures that half of \refsr automatically holds, namely that
\[\forall\indin,\quad\pebblegen[\ind][y]\in\Segmentind\quad\to\quad\pebblegen[1-\ind][y]=\bijgen[\pebblegen[\ind][y]]\,.\]

Similarly, the part of \refer concerning $\pebbletwo$ and $\pebbletwobis$ follows from \refer[r+1]:
\[\envtp[\pebbletwo][r]=\envtpbis[\pebbletwobis][r]\,.\]

Lastly, notice that once we have shown that \refer holds, it follows that
\[
\left\{
\begin{array}{l}
  \vocabtpone=\vocabtponebis\\
  \vocabtptwo=\vocabtptwobis
\end{array}
\right.
\]


\end{note}

\subsection{When the spoiler plays near the endpoints: Case~(\ref{enu:border})}
\label{sec:border}

In this section, we treat the case where the spoiler places $\pebbleone$ near the $\ord$-minimal or $\ord$-maximal element of $\orderstruct$. Obviously, what ``near'' means depends on the number of rounds left in the game; the more rounds remain, the more the duplicator must be cautious regarding the possibility for the spoiler to reach an endpoint and potentially expose a difference between $\orderstruct$ and $\orderstructbis$.

As we have stated in Case~(\ref{enu:border}), with $r$ rounds left, we consider a move on $\pebbleone$ by the spoiler to be near the endpoints if it is made in $\Segment$. In that case, the duplicator responds along the tit-for-tat strategy, namely by placing $\pebbleonebis$ on $\bij$.

Let us now prove that this strategy guarantees that \refsr, \refer and \reftr hold. Recall from Note~\ref{note:y} that part of the task is already taken care of.

\paragraph*{Proof of \refsr in Case~(\ref{enu:border})}

We have to show that $\forall\indin,\ \pebblegen[\ind][x]\in\Segmentind\ \to\ \pebblegen[1-\ind][x]=\bijgen[\pebblegen[\ind][x]]\,.$
This follows directly from the duplicator's strategy, since $\pebbleonebis=\bij$ (thus $\pebbleone=\bijbis$).

\paragraph*{Proof of \refer in Case~(\ref{enu:border})}

We need to prove that $\envtp[\pebbleone][r]=\envtpbis[\pebbleonebis][r]\,,$ which is a consequence of Lemma~\ref{lem:envpres} given that $\pebbleonebis=\bij$ and $r<k$.

\paragraph*{Proof of \reftr in Case~(\ref{enu:border})}

First, suppose that $\pebbletwo\in\Segment[r+1]$. By \refsr[r+1], we know that $\pebbletwobis=\bij[\pebbletwo]$. Thus, Lemma~\ref{lem:tppres} allows us to conclude that $\tp=\tpbis$.

Otherwise, $\pebbletwo\notin\Segment[r+1]$ and \refsr[r+1] entails that $\pebbletwobis\notin\Segmentbis[r+1]$.

We have two points to establish:
\begin{align}
  \vocabtp=\vocabtpbis\label{eq:I_vocab}\\
  \ordtp=\ordtpbis\label{eq:I_order}
\end{align}

Notice that
\[
\left\{
\begin{array}{l}
  \vocabtp=\vocabtpone\cup\vocabtptwo\\
  \vocabtpbis=\vocabtponebis\cup\vocabtptwobis
\end{array}
\right.
\]
This is because, by construction, the neighbors in $\structind$ of an element of $\Segmentind$ all belong to $\Segmentind[r+1]$. Equation~(\ref{eq:I_vocab}) follows from this remark and Note~\ref{note:y}.

As for Equation~(\ref{eq:I_order}), either \[\pebbleone\in\Rare\cup\bigcup_{0\leq j\leq r}\Left\quad\text{and}\quad\pebbleonebis\in\Rarebis\cup\bigcup_{0\leq j\leq r}\Leftbis\,,\] in which case $\ordtp=\{x<y\}=\ordtpbis\,,$ or \[\pebbleone\in\bigcup_{0\leq j\leq r}\Right\quad\text{and}\quad\pebbleonebis\in\bigcup_{0\leq j\leq r}\Rightbis\,,\] in which case $\ordtp=\{x>y\}=\ordtpbis\,.$

\subsection{When the spoiler plays next to the other pebble: Case (\ref{enu:neigh})}
\label{sec:neigh}

Suppose now that the spoiler places $\pebbleone$ next to the other pebble in $\struct$ (i.e. $\pebbleone\in\voisdom$), but not in $\Segment$ (for that move would fall under the jurisdiction of Case~(\ref{enu:border})).

In that case, the duplicator must place $\pebbleonebis$ so that the relative position of $\pebbleonebis$ and $\pebbletwobis$ is the same as that of $\pebbleone$ and $\pebbletwo$.

For that, we can use \refer[r+1], which guarantees that $\envtp[\pebbletwo][r+1]=\envtpbis[\pebbletwobis][r+1]\,.$ Thus there exists an isomorphism $\psi$ between $\Env[\pebbletwo][r+1]$ and $\Envbis[\pebbletwobis][r+1]$. Note that this isomorphism is unique, by virtue of $\ord$ and $\ordbis$ being linear orders.

The duplicator's response is to place $\pebbleonebis$ on $\psi(\pebbleone)$. Let us now prove that this strategy is correct with respect to our invariants \refsr, \refer and \reftr.

\paragraph*{Proof of \refsr in Case~(\ref{enu:neigh})}

Because the spoiler's move does not fall under Case~(\ref{enu:border}), we know that $\pebbleone\notin\Segment$.

Let us now show that $\pebbleonebis$ is not near the endpoints either: suppose that $\pebbleonebis\in\Segmentbis$. By construction, since $\pebbleonebis$ and $\pebbletwobis$ are neighbors in $\structbis$, this entails that $\pebbletwobis\in\Segmentbis[r+1]$. But then, we know by \refsr[r+1] that $\pebbletwo=\bijbis[\pebbletwobis]$; and because $\psi$ is the unique isomorphism between $\Env[\pebbletwo][r+1]$ and $\Envbis[\pebbletwobis][r+1]$, $\psi$ is equal to the restriction $\widetilde\bijsymb$ of $\bijsymb$:
\[\widetilde\bijsymb\ :\ \Env[\pebbletwo][r+1]\ \to\ \Envbis[\pebbletwobis][r+1]\,.\]

Thus $\pebbleone=\psi^{-1}(\pebbleonebis)=\widetilde\bijsymb^{-1}(\pebbleonebis)=\bijbis$, and by definition of the segments on $\orderstructbis$, which are just a transposition of the segments of $\orderstruct$ via $\bijsymb$, $\pebbleonebis\in\Segmentbis$ then entails that $\pebbleone\in\Segment$, which is absurd.

Since we neither have $\pebbleone\in\Segment$ nor $\pebbleonebis\in\Segmentbis$, \refsr holds - recall from Note~\ref{note:y} that the part concerning $\pebbletwo$ and $\pebbletwobis$ is always satisfied.

\paragraph*{Proof of \refer in Case~(\ref{enu:neigh})}

Recall that the duplicator placed $\pebbleonebis$ on the image of $\pebbleone$ by the isomorphism \[\psi:\Env[\pebbletwo][r+1]\to\Envbis[\pebbletwobis][r+1]\,.\]

It is easy to check that the restriction $\widetilde\psi$ of $\psi$:
$\widetilde\psi\ :\ \Env[\pebbleone][r]\ \to\ \Envbis[\pebbleonebis][r]$
is well defined, and is indeed an isomorphism.\\ This ensures that $\envtp[\pebbleone][r]=\envtpbis[\pebbleonebis][r]\,,$ thus completing the proof of \refer.

\paragraph*{Proof of \reftr in Case~(\ref{enu:neigh})}

This follows immediately from the fact that the isomorphism $\psi$ maps $\pebbleone$ to $\pebbleonebis$ and $\pebbletwo$ to $\pebbletwobis$: all the atomic facts about these elements are preserved.

\subsection{When the spoiler plays to the left: Cases~(\ref{enu:farleft}) and (\ref{enu:left})}
\label{sec:left}

We now treat our last case, which covers both Cases~(\ref{enu:farleft}) and (\ref{enu:left}), i.e. the instances where the spoiler places $\pebbleone$ to the left of $\pebbletwo$ (formally: such that $\orderstruct\models\pebbleone<\pebbletwo$), which do not already fall in Cases~(\ref{enu:border}) and (\ref{enu:neigh}).

Note that the scenario in which the spoiler plays to the right of the other pebble is the exact symmetric of this one (since the $\Rareind$ play no role in this case, left and right can be interchanged harmlessly).

The idea here is very simple: since the spoiler has placed $\pebbleone$ to the left of $\pebbletwo$, but neither in $\Segment$ nor in $\voisdom$, the duplicator responds by placing $\pebbleonebis$ on an element of $\ULeftbis[r+1]$ (the leftmost universal segment not in $\Segmentbis$) sharing the same $k$-environment.

This is possible by construction of the universal segments: if $\otype:=\envtp[\pebbleone]$ (which must extend a frequent $k$-neighborhood type, since $\pebbleone\notin\Rare$), then $\bij[\pinLeft[\otype][r+1]]$ satisfies the requirements.

There is one caveat to this strategy. If $\pebbletwobis$ is itself in $\Leftbis[r+1]$, two problems may arise: first, it is possible for $\pebbleonebis$ and $\pebbletwobis$ to be in the wrong order (i.e. such that $\orderstructbis\models\pebbleonebis>\pebbletwobis$). Second, it may be the case that $\pebbleonebis$ and $\pebbletwobis$ are neighbors in $\structbis$, which, together with the fact that $\pebbleone$ and $\pebbletwo$ are orthogonal in $\struct$ (i.e. $\vocabtp=\vocabtpone\cup\vocabtptwo$), would break \reftr.

This is why the duplicator's strategy depends on whether $\pebbletwobis\in\Leftbis[r+1]$:

\begin{itemize}
\item if this is not the case, then the duplicator places $\pebbleonebis$ on $\bij[\pinLeft[\otype][r+1]]$. This corresponds to Case~(\ref{enu:farleft}).
\item if $\pebbletwobis\in\Leftbis[r+1]$, then \refsr[r+1] guarantees that $\pebbletwo\in\Left[r+1]$. Hence $\pebbleone$, which is located to the left of $\pebbletwo$, is in the domain of $\bijsymb$: the duplicator moves $\pebbleonebis$ to $\bij$. This situation corresponds to Case~(\ref{enu:left}).
\end{itemize}

Let us prove that \refsr, \refer and \reftr hold in both of these instances.

\paragraph*{Proof of \refsr in Case~(\ref{enu:farleft})}

Since the spoiler's move does not fall under Case~(\ref{enu:border}), we have that $\pebbleone\notin\Segment$.

By construction, $\pinLeft[\otype][r+1]\in\Left[r+1]$, thus $\bij[\pinLeft[\otype][r+1]]\in\Leftbis[r+1]$, and $\pebbleonebis\notin\Segmentbis$.

\paragraph*{Proof of \refer in Case~(\ref{enu:farleft})}

It follows from $\envtp[\pinLeft[\otype][r+1]]=\otype$ together with Lemma~\ref{lem:envpres} that \[\envtp[\pebbleone]=\envtpbis[\pebbleonebis]\,.\]
A fortiori, $\envtp[\pebbleone][r]=\envtpbis[\pebbleonebis][r]$.

\paragraph*{Proof of \reftr in Case~(\ref{enu:farleft})}

Because the spoiler's move does not fall under Case~(\ref{enu:neigh}), $\pebbleone\notin\voisdom$. In other words, \[\vocabtp=\vocabtpone\cup\vocabtptwo\,.\]
Recall the construction of $\ULeft[r+1]$: the whole $k$-neighborhood of $\pinLeft[\otype][r+1]$ was included in this segment. In particular, $\boule[\pebbleonebis][\structbis][1]=\boule[\bij[\pinLeft[\otype][r+1]]][\structbis][1]\subseteq\ULeftbis[r+1]\,.$
By assumption, $\pebbletwobis\notin\Leftbis[r+1]$, which entails that $\vocabtpbis=\vocabtponebis\cup\vocabtptwobis\,.$
\\
It then follows from the last observation of Note~\ref{note:y} that $\vocabtp=\vocabtpbis\,.$

Let us now prove that $\ordtpbis=\{x<y\}$.
\\
We claim that $\pebbletwobis\notin\Rarebis\cup\bigcup_{0\leq j\leq r+1}\Leftbis$. Suppose otherwise: \refsr[r+1] would entail that $\pebbletwo\in\Rare\cup\bigcup_{0\leq j\leq r+1}\Left$ which, together with the hypothesis $\pebbletwo\notin\Left[r+1]$ and $\pebbleone<\pebbletwo$, would result in $\pebbleone$ being in $\Segment$, which is absurd.
\\
Thus, $\ordtpbis=\{x<y\}=\ordtp$, which concludes the proof of \reftr.

\paragraph*{Proof of \refsr, \refer and \reftr in Case~(\ref{enu:left})}

Let us now move to the case where $\pebbletwobis\in\Leftbis[r+1]$. Recall that under this assumption, $\pebbletwo=\bijbis[\pebbletwobis]\in\Left[r+1]$ and since $\pebbleone<\pebbletwo$ and $\pebbleone\notin\Segment$, we have that $\pebbleone\in\Left[r+1]$.

The duplicator places the pebble $\pebbleonebis$ on $\bij$; in particular, $\pebbleonebis\in\Leftbis[r+1]$.

The proof of \refsr follows from the simple observation that $\pebbleone\notin\Segment$ and $\pebbleonebis\notin\Segmentbis$.

As for \refer and \reftr, they follow readily from Lemma~\ref{lem:envpres} and \ref{lem:tppres} and the fact that
$\pebbleonebis=\bij$ and $\pebbletwobis=\bij[\pebbletwo]$.

\section{Counting quantifiers}
\label{sec:counting}

We now consider the extension $\Ctwo$ of $\FOtwo$, where one is allowed to use counting quantifiers of the form $\existsm{i}x$ and $\existsm{i}y$, for $i\in\N$. Such a quantifier, as expected, expresses the existence of at least $i$ elements satisfying the formula which follows it.

This logic $\Ctwo$ has been extensively studied. On an expressiveness standpoint, $\Ctwo$ stricly extends $\FOtwo$ (which cannot count up to three), and contrary to the latter, \Ctwo does not enjoy the small model property. However, the satisfiability problem for $\Ctwo$ is still decidable~\cite{DBLP:conf/lics/GradelOR97,DBLP:journals/logcom/Pratt-Hartmann07,DBLP:conf/wollic/Pratt-Hartmann10}. To the best of our knowledge, it is not known whether $\oictwo$ has a decidable syntax.

Let us now explain how the proof of Theorem~\ref{th:main} can be adapted to show the following stronger version:

\begin{theorem}
  \label{th:mainCtwo}
  Let $\classe$ be a class of structures of bounded degree.

  Then $\oictwo\subseteq\FO$ on $\classe$.
\end{theorem}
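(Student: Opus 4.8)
The plan is to mimic the proof of Theorem~\ref{th:main}, tracking at each point where counting quantifiers force a strengthening of the argument. As before, it suffices to exhibit a function $f_{\mathrm C}:\N\to\N$ such that $\struct\foeq{f_{\mathrm C}(k)}\structbis$ implies $\struct\oictwoeq{k}\structbis$, and to obtain this it is enough to construct, for $\struct,\structbis\in\classe$ that are \FO-similar at a sufficiently large depth, linear orders $\ord,\ordbis$ with $\orderstruct\ctwoeq{k}\orderstructbis$. Here \Ctwo-similarity at depth $k$ should be captured by a \emph{bijective} two-pebble \EF game (the Immerman--Kozen game augmented à la Hella): in each round the spoiler picks a pebble, the duplicator produces a bijection between the two universes, the spoiler moves the chosen pebble in $\orderstruct$, and the duplicator answers with its image. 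Correctness of this game variant (the analogue of Theorem~\ref{th:EF}) is folklore and can be cited.

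The construction of $\ord$ and $\ordbis$ from Section~\ref{sec:construction} carries over essentially verbatim: the dichotomy between rare and frequent $k$-neighborhood types, the segmentation of $\structinddom$ into $\Rareind$, the $\Leftind$, $\Rightind$ and $\Middleind$, the universality of the $\ULeftind$ and $\URightind$, and the transfer via $\bijsymb$ along an isomorphism on $\Segment[2k]\cup\Neigh{\Segment[2k]}$. The only place where the \FO-similarity hypothesis was used quantitatively — Proposition~\ref{prop:FO_neigh} — is harmless, since \Ctwo-formulas of quantifier rank $k$ can be simulated by \FO-formulas of quantifier rank bounded in terms of $k$ (and the occurring constants $i$, which can be capped at the relevant threshold). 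So the combinatorial scaffolding needs no change; what changes is the analysis of the duplicator's strategy.

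The main point is the extra bookkeeping required by the bijective game. In the ordinary game (Cases~(\ref{enu:border})--(\ref{enu:right})) the duplicator answered a single element; now it must answer a full bijection $\structdom\to\structbisdom$ whose restriction respects \refsr, \refer, \reftr on \emph{every} pair it produces. I would therefore strengthen the invariant so that, with $r$ rounds left, the duplicator maintains a bijection $\bijsymb_r$ between $\struct$ and $\structbis$ that (i) agrees with $\bijsymb$ on $\Segment$ near the endpoints, (ii) maps each element to one with the same $r$-environment type, and (iii) is ``cardinality-faithful'': for every $r$-environment type $\otype$ extending a frequent $k$-neighborhood type, $\bijsymb_r$ maps the occurrences of $\otype$ in $\orderstruct$ bijectively onto those in $\orderstructbis$. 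The universality of the $\ULeftind$ and the fact that frequent types have $\geq \thr(k)$ occurrences in \emph{both} structures (so in particular equinumerous up to the threshold) is what makes such a bijection exist; for the rare types one uses that $\struct$ and $\structbis$ have exactly the same number of occurrences of each rare $k$-neighborhood type, hence the bijection can match them one-to-one together with their neighborhoods inside $\Rareind$. Near the endpoints one simply uses $\bijsymb$ itself, which is already a bijection onto its (isomorphic) image; the leftover elements of $\Middleind$ are matched arbitrarily but type-respectingly, again using equinumerosity up to the threshold. Once the spoiler picks a pebble and a target, the duplicator re-derives the next bijection $\bijsymb_{r-1}$ by the same recipe with the new pebble positions playing the role of the ``set $B$'' in Lemma~\ref{lem:scatter}-style reasoning; the verification that \refsr, \refer, \reftr survive is then case-by-case exactly as in Sections~\ref{sec:border}--\ref{sec:left}, with the new clause that the answer lies in the prescribed fibre of $\bijsymb_{r-1}$.

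The hard part will be clause (iii): guaranteeing that the bijection is simultaneously type-respecting, compatible with the endpoint region, and surjective — in particular that after consuming $k$ rounds the running bijections never ``run out'' of occurrences of some environment type in one structure while the other still has one, and that they remain coherent with the already-pebbled elements. Making the numerology work requires choosing the threshold $\thr_{\mathrm C}(k)$ large enough that, after removing the bounded number of pebbled elements and their $k$-neighborhoods and the contents of $\Rareind$ and the endpoint segments, the remaining occurrences of each frequent environment type are still equinumerous (they are, since the surplus is $\geq\thr_{\mathrm C}(k)$ on both sides and we only ever delete a $k$-bounded amount). Provided one sets $m$ and the threshold slightly larger than in~(\ref{eq:m_and_k}) — say replacing $2(k+1)$ by a term absorbing the at most $k$ deletions and the number of environment types — the same construction goes through. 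I would state this as the needed modification and then assert that, mutatis mutandis, the proofs of Sections~\ref{sec:EFproof} go through for the bijective game, yielding $\orderstruct\ctwoeq{k}\orderstructbis$ and hence Theorem~\ref{th:mainCtwo}.
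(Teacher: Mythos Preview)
Your reduction to an \EF game is the right shape, but the game you chose is too strong and this breaks the argument. The Hella-style \emph{bijective} two-pebble game in which the duplicator must exhibit a full bijection $\structdom\to\structbisdom$ characterises $\Ctwo$-equivalence with \emph{unbounded} counting indexes; in particular the duplicator loses immediately unless $|\structdom|=|\structbisdom|$. But the hypothesis $\struct\foeq{f_{\mathrm C}(k)}\structbis$ only yields $\threq{\struct}{\structbis}{k}{\thr(k)}$, i.e.\ agreement on the number of occurrences of each $k$-neighborhood type \emph{up to a threshold}. Two structures can satisfy this while having wildly different sizes (one with $10^3$ and the other with $10^6$ occurrences of some frequent type), so no bijection $\bijsymb_r$ as in your clause~(iii) can exist. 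Your sentence ``the leftover elements of $\Middleind$ are matched arbitrarily but type-respectingly, again using equinumerosity up to the threshold'' is precisely where the argument fails: equinumerosity up to a threshold is not equinumerosity.

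The paper sidesteps this by noting that any fixed \Ctwo-sentence has bounded counting indexes, so it suffices to show, for each $k$, that $\orderstruct$ and $\orderstructbis$ agree on \Ctwo-sentences of quantifier rank at most $k$ \emph{and} counting indexes at most $k$. The corresponding game is not bijective: in each round the spoiler selects a set of at most $k$ elements, the duplicator answers with a set of the same cardinality, then spoiler and duplicator pick one element each from the opposite set. To win this weaker game, the duplicator need not produce a global bijection; it only needs, for each of the spoiler's $\leq k$ elements, a distinct response obeying the old invariants \refsr, \refer, \reftr. The single change to the construction is to plant $k$ (rather than one) witnesses of every frequent $k$-environment type in each universal segment $\ULeftind,\URightind$ --- i.e.\ multiply $m$ in~(\ref{eq:m_and_k}) by $k$. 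Cases~(\ref{enu:border}), (\ref{enu:neigh}), (\ref{enu:left}), (\ref{enu:right}) are already injective since they go through an isomorphism; Cases~(\ref{enu:farleft}) and (\ref{enu:farright}) now have $k$ available targets per environment type, which is enough. Replace your bijective-game invariant by this bounded-set game and the rest of your outline goes through.
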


\begin{proof}
  The proof is very similar as to that of Theorem~\ref{th:main}. The difference is that we now need to show, at the end of the construction, that the structures $\orderstruct$ and $\orderstructbis$ are not only $\FOtwo$-similar, but $\Ctwo$-similar. More precisely, we show that for every $k\in\N$, there exists some $f(k)\in\N$ such that if $\struct\foeq{f(k)}\structbis$, then it is possible to construct two linear orders $\ord$ and $\ordbis$ such that $\orderstruct$ and $\orderstructbis$ agree on all $\Ctwo$-sentences of quantifier rank at most $k$, and with counting indexes at most $k$, which we denote $\orderstruct\ctwoeq{k,k}\orderstructbis\,.$
  
  This is enough to complete the proof, as these classes of $\Ctwo$-sentences cover all the $\Ctwo$-definable properties.

  In order to prove that $\orderstruct\ctwoeq{k,k}\orderstructbis$, we need an \EF-game capturing $\ctwoeq{k,k}$. It is not hard to derive such a game from the \EF-game for $\Ctwo$~\cite{immerman1990describing}.

  This game only differs from the two-pebble \EF-game in that in each round, once the spoiler has chosen a structure (say $\orderstruct$) and a pebble to move (say $\pebbleone$), the spoiler picks not only one element of that structure, but a set $\set$ of up to $k$ elements. Then the duplicator must respond with a set $\setbis$ of same cardinality in $\orderstructbis$. The spoiler then places $\pebbleonebis$ on any element of $\setbis$, to which the duplicator responds by placing $\pebbleone$ on some element of $\set$. As usual, the spoiler wins after this round if $\tp\neq\tpbis\,.$ Otherwise, the games goes on until $k$ rounds are played.

  It is not hard to establish that this game indeed captures $\ctwoeq{k,k}$, in the sense that $\orderstruct\ctwoeq{k,k}\orderstructbis$ if and only if the duplicator has a winning strategy for $k$ rounds of this game. The restriction on the cardinal of the set chosen by the spoiler (which is at most $k$) indeed corresponds to the fact that the counting indexes of the formulas are at most $k$. As for the number of rounds (namely, $k$), it corresponds as usual to the quantifier rank. This can be easily derived from a proof of Theorem 5.3 in~\cite{immerman1990describing}, and is left to the reader.

  Let us now explain how to modify the construction of $\ord$ and $\ordbis$ presented in Section~\ref{sec:construction} in order for the duplicator to maintain similarity for $k$-round in such a game. The only difference lies in the choice of the universal elements. Recall that in the previous construction, we chose, for each $k$-environment type $\tau_l$ extending a frequent $k$-neighborhood type and each segment $\ULeft$, an element $\pinLeft$ whose $k$-environment type in $\orderstruct$ is destined to be $\tau_l$ (and similarly for $\URight$ and $\pinRight$).

  In the new construction, we pick $k$ such elements, instead of just one. Just as previously, all these elements must be far enough from one another in the Gaifman graph of $\struct$. Once again, this condition can be met by virtue of the $k$-neighborhood type $\tau$ underlying $\tau_l$ being frequent, and thus having many occurrences scattered across $\struct$ (remember that we have a bound on the degree of $\struct$, thus all the occurrences of $\tau$ cannot be concentrated). We only need to multiply the value of $m$ by $k$ in~(\ref{eq:m_and_k}).
  
  When the spoiler picks a set of elements of size at most $k$ in one of the structures (say $\set$ in $\orderstruct$), the duplicator responds by selecting, for each one of the elements of $\set$, an element in $\orderstructbis$ along the strategy for the $\FOtwo$-game explained in Section~\ref{subsec:strategy}. All that remains to be shown is that it is possible for the duplicator to answer each element of $\set$ with a different element in $\orderstructbis$.

  Note that if the duplicator follows the strategy from Section~\ref{subsec:strategy}, they will never answer two moves by the spoiler falling under different cases among Cases~(\ref{enu:border})-(\ref{enu:right}) with the same element. Thus we can treat separately each one of these cases; and for each case, we show that if the spoiler chooses up to $k$ elements in $\orderstruct$ falling under this case in $\set$, then the duplicator can find the same number of elements in $\orderstructbis$, following the aforementionned strategy.
  \begin{itemize}
  \item For Case~(\ref{enu:border}), this is straightforward, since the strategy is based on the isomorphism between the borders of the linear orders. The same goes for Cases~(\ref{enu:neigh}), (\ref{enu:left}) and (\ref{enu:right}), as the strategy in these cases also relies on an isomorphism argument.
  \item Suppose now that $\pebbletwo\notin\Left[r+1]$, and assume that the spoiler chooses several elements to the left of $\pebbletwo$, but outside of $\Segment$ and not adjacent to $\pebbletwo$. This corresponds to Case~(\ref{enu:farleft}). Recall that our new construction guarantees, for each $k$-environment type extending a frequent $k$-neighborhood type, the existence in $\Leftbis[r+1]$ of $k$ elements having this environment. This lets us choose, in $\Leftbis[r+1]$, a distinct answer for each element in the set selected by the spoiler, sharing the same $k$-environment type. Case~(\ref{enu:farright}) is obviously symmetric.
  \end{itemize}
  This concludes the proof of Theorem~\ref{th:mainCtwo}.
\end{proof}

\section{Conclusion}

\newcommand{\horizoffset}{16}

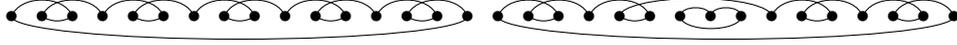
\begin{figure*}[!ht]
  \centering
  \begin{tikzpicture}[scale=.4]

    \foreach \a in {0,...,15}
    \node at (\a,0) {$\bullet$};

    \foreach \a/\b in {1/2,4/5,7/8,10/11,13/14}
    \draw[-] (\a,0) edge[bend right] (\b,0);

    \foreach \a/\b in {0/2,1/3,3/5,4/6,6/8,7/9,9/11,10/12,12/14,13/15}
    \draw[-] (\a,0) edge[bend left=70] (\b,0);

    \draw (0,0) arc (180:360:7.5cm and .75cm);

    \foreach \a in {0,...,15}
    \node at ({{\a+\horizoffset}},0) {$\bullet$};

    \foreach \a/\b in {1/2,4/5,10/11,13/14}
    \draw[-] ({{\a+\horizoffset}},0) edge[bend right] ({{\b+\horizoffset}},0);

    \foreach \a/\b in {0/2,1/3,3/5,9/11,10/12,12/14,13/15}
    \draw[-] ({{\a+\horizoffset}},0) edge[bend left=70] ({{\b+\horizoffset}},0);

    \foreach \a/\b in {6/7,7/8}
    \draw[-] ({{\a+\horizoffset}},0) edge[bend left=90] ({{\b+\horizoffset}},0);

    \draw ({{6+\horizoffset}},0) arc (180:360:1cm and .4cm);

    \draw ({{4+\horizoffset}},0) arc (180:0:2.5cm and .6cm);
        
    \draw (\horizoffset,0) arc (180:360:7.5cm and .75cm);

  \end{tikzpicture}
  \caption{Illustration of two linear orders (growing from left to right) on a cycle (left figure) and the disjoint union of a cycle and a triangle (right figure), which are indistinguishable by any \Ctwo-sentence of small enough quantifier rank and maximal counting index (where ``small'' is understood with respect to the length of the cycles).}
  \label{fig:sep}
\end{figure*}

We have established that, when the degree is bounded, properties definable in the order-invariant extension of the two-variable fragment of first-order logic with counting are definable in first-order logic.

From there, there seem to be three axes in which one can try to complete the picture.

First, a natural question is whether this inclusion of expressiveness still holds when we release the hypothesis on the degree. The classical examples separating \oifo from \FO seem to require at least three variables (for example, the Potthoff separating example~\cite{potthoff1994logische} can be expressed with three variables, thus proving that $\oifothree\not\subseteq\FO$). It would be very interesting to know whether such a separating example exists for \oictwo, or even for \oifotwo.

Second, it is quite clear that the inclusion $\oictwo\subseteq\FO$ when the degree is bounded is a severe over-approximation of the expressive power of \oictwo. For instance, it is not hard to prove that \oictwo cannot define the class of triangle-free graphs: no sentence from \oictwo can make a distinction between a large enough cycle and the disjoint union of a large enough cycle together with a triangle. This can be seen, following the general strategy detailed in Section~\ref{sec:main}, by constructing two carefully chosen linear orders on these graphs. Figure~\ref{fig:sep} illustrates the construction of such orders. Notice that there are only three kinds of elements: those whose two neighbors are on their left, those for which they are on their right, and those which have one neighbor on each side. By making sure to always respond to a move by the spoiler with an element of the same kind (and, of course, by implementing a tit-for-tat strategy near the endpoints), the duplicator can easily win the \EF-game capturing $\ctwoeq{k,k}$, provided that the cycles are long enough with respect to $k$.

It would be interesting to find upper bounds for \oifotwo and \oictwo tighter than \FO - that is, tighter than the fragment $\exists^*\forall^*\exists^*\FO$, to which \FO collaspes when the degree is bounded~\cite{DBLP:journals/iandc/FaginSV95} (since already in this fragment, one can count the number of occurrences of neighborhood types up to some threshold). Let us briefly explain why we fall short of giving such a bound: in such an attempt, the initial assumption about the similarity between $\struct$ and $\structbis$ would be weaker than \FO-similarity, and it would not be possible to base our work on neighborhoods. In this context, the starting hypothesis on the two structures would lack the rigidity which seems necessary to construct linear orders preserving \FOtwo- or \Ctwo-similarity. Establishing such a tighter bound thus seems to call for new techniques.

Last, we conjecture that the inclusion still holds when we lift the restriction on the number of variables; namely that $\oifo=\FO$ when the degree is bounded. This would generalize the equality $\sifo=\FO$ when the degree is bounded, obtained in~\cite{DBLP:conf/lics/Grange20}. Our hope is that a construction inspired by this one - albeit significantly refined - and in particular by the alternation of universal and neighbors segments, could possibly lead to establish such a result.

We leave these three questions, as well as the issue of the syntactic decidability of \oictwo, for further research.

\bibliography{biblio}

\end{document}